\newtheorem{theorem}{Theorem}[section]
\newtheorem{corollary}[theorem]{Corollary}
\newtheorem{definition}[theorem]{Definition}
\newtheorem{lemma}[theorem]{Lemma}
\newtheorem{proposition}[theorem]{Proposition}
\newtheorem{remark}[theorem]{Remark}
\newenvironment{proof}[1][Proof]{\textbf{#1.} }{\ \rule{0.5em}{0.5em}}
\newcommand{\refeqn}[1]{(\ref{#1})}
\newcommand{\cinf}[0]{C^{\infty}}
\newcommand{\spann}[0]{\operatorname{span}}
\newcommand{\jinf}[0]{J^{\infty}}
\newcommand{\acosh}[0]{\operatorname{acosh}}
\begin{document}

\title{{\bf The geometry of differential constraints for a class of evolution PDEs}}
\author{ Francesco C. De Vecchi* and Paola Morando** \\
*Dip. di Matematica, Universit\`a degli Studi di Milano, \\
via Saldini 50, Milano;\\
**DISAA, Universit\`a degli Studi di Milano, \\
via Celoria 2, Milano;\\
francesco.devecchi@unimi.it, paola.morando@unimi.it}
\date{}

\maketitle

Mathematics Subject Classification: 35A30, 35B06

Keywords: Differential constraints, Partial differential equations, Characteristics

\begin{abstract}
The problem of computing differential constraints for a family of evolution PDEs is discussed from a constructive point of view. A new method, based on  the existence of generalized characteristics for evolution vector fields, is proposed in order  to obtain explicit differential constraints for  PDEs belonging to this family. Several examples, with applications in non-linear stochastic filtering theory, stochastic perturbation of soliton equations and non-isospectral integrable systems, are discussed in detail to verify the  effectiveness  of the method.
\end{abstract}

\section{Introduction}

The method of differential constraints is a well known and general method for determining particular explicit solutions to a partial differential equation (PDE) reducing the PDE to a system of  ordinary differential equations (ODEs) through a suitable
ansatz on the form of the solution. In particular, given a   system of  evolution equations of the form
$\partial_t(u^k)=F^k(x,t,u,u_{\sigma})$, where $(x^i, u^k)\in M \times N$
and $u^k_{\sigma}$ are the derivatives of $u^k$ with respect to $x^j$ the number of times defined by the multi-index $\sigma$,  we can look for  solutions of the form
\begin{equation}\label{equation_K}
u(x,t)=K(x,w^1(z),...,w^L(z)),
\end{equation}
where  $K:M \times \mathbb{R}^L \to N$ and $z: M \times \mathbb{R} \to  \mathbb{R}$ are smooth functions. Replacing this ansatz in the initial evolution equation, we may
obtain a system of ODEs for the functions $w^i$ with respect to the variable $z$.
We remark that, for general  functions $K$ and $z$,  the system of ODEs for $w^i$ is overdetermined and  has no solutions. When the system
for the functions $w^i$ admits solutions, the ansatz \eqref{equation_K} is said compatible with the equation and  $K$ is called a \emph{differential constraint} for $\partial_t(u^k)=F^k(x,t,u,u_{\sigma})$.
This method appears with different names in several  papers and books
(see, e.g.,   \cite{Clarkson_Kruskal,Galaktionov,Olver2,Kruglikov,Meleshko,Olver1,Pucci_Saccomandi}) and is equivalent to append to the original equation a suitable   overdetermined  system of  PDEs of
the form $\mathcal{I}=\{I_1(x,t,u,u_{\sigma})=0,...,I_K(x,t,u,u_{\sigma})=0\}$.
Indeed, if $\mathcal{I}$ admits a finite dimensional solution, this can be described by a function of the form \refeqn{equation_K} and  the requirement that $\mathcal{I}$ and $\partial_t(u^k)=F^k(x,t,u,u_{\sigma})$ have common solutions can be interpreted as the compatibility condition for the ansatz.
In that situation  the system  $\mathcal{I}$  is called
 differential constraint as well and, in    order to distinguish between the two approaches,  some Authors
refer to \emph{direct differential constraints} for the formulation with the function $K$  and to \emph{indirect differential constraints} when the system   $\mathcal{I}$ is considered (see, e.g. \cite{Olver1,Pucci_Saccomandi}).\\
In general,  checking  that the ansatz \refeqn{equation_K} or the overdetermined system $\mathcal{I}$ are compatible with the evolution equations $\partial_t(u^k)=F^k(x,t,u,u_{\sigma})$  is a difficult task. Even in the simplest case $M=\mathbb{R}$,
in order to verify  that $\mathcal{I}$ is a differential constraint, we have to solve a system
of strongly non-linear PDEs for the unknown  functions $I_k$.
For this reason many Authors look for differential constraints imposing some restrictions on the form of the functions $K$ and $z$, or, equivalently, on the
form of the constraints $I_k$ or, finally, on the form of the equation $\partial_t(u^k)=F^k(x,t,u,u_{\sigma})$.\\

In this paper we are interested in evolution equations with
 a particular form of the functions $F^k$ and  we make suitable assumptions on the form of the differential constraints.  In particular we deal with the problem of finding differential constraints for evolution PDEs  of the form
\begin{equation}\label{equation_problem}
\partial_t (u^k)=\sum_{i=1}^s c^i(t)F^k_i(x,u,u_{\sigma}),
\end{equation}
where the functions $F^k_i$ do not depend on $t$. Furthermore we chose the new  variable  $z(x,t)=t$ and we look for  differential constraints  that are independent
of any possible choice of the smooth functions $c^i(t)$. \\
The choice of equations of the form \refeqn{equation_problem} is not a matter of computational convenience, but is triggered by many theoretical and applied problems arising in different  branches of mathematics.\\
First of all, evolution equations of the form \refeqn{equation_problem} appear in the theory of stochastic processes and in particular in the study of  finite dimensional solutions to stochastic partial differential equations (SPDEs). This topic is discussed in many applications of SPDEs: see,  for examples,  \cite{DeLara1,DeLara2,Hu_Yau_Chiou} for applications to non-linear filtering problems,
\cite{Filipovic1,Filipovic3,Filipovic2} for applications to financial problems and  \cite{Holm1,Wadati,Xie} for applications to stochastic soliton equations.\\
In this framework the problem of finding finite dimensional solutions can be reduced to the problem of finding differential constraints for evolution equations
of the form \refeqn{equation_problem} for any choice of $c^k(t)$. Indeed  we can (formally) see the functions $c^k(t)$ as the
derivatives\footnote{The usual stochastic process considered in SPDE theory is Brownian motion that does not admit derivative almost surely. For this
reason we say that the deterministic problem proposed in this paper is formally equivalent to  the stochastic one. A detailed discussion of the possible application of differential constraints in SPDE theory  will be provided  in a following paper exploiting the geometric techniques developed here.} of some stochastic process which can assume any possible values.\\
Moreover, equations of the form \refeqn{equation_problem} have  applications to integrable system theory, and our results turn out to be useful for dealing with  non-isospectral deformations of integrable systems (see \cite{Berezansky,Calogero,Gordoa_Pickering}). In fact the local non-isospectral deformation of KdV hierarchy can be reduced to the usual isospectral KdV hierarchy by means of a time dependent transformation (see e.g. \cite{Gordoa_Pickering_Wattis} and the
examples of Subsections \ref{subsection_transport_equation} and \ref{subsection_KdV}).\\
Finally, evolution equations of the form \eqref{equation_problem}  can be considered in infinite dimensional control theory (see \cite{Daprato,Lions}). In this framework our method can be interpreted as an application of usual methods of geometric control theory  to the explicit computation of the reachable sets of some particular point (see \cite{Agrachev} for the finite dimensional case and \cite{Khajeh} for the infinite dimensional one).\\
It is worth to remark that, although the problem of finding differential constraints for equations of the form \refeqn{equation_problem} has been faced many times, to the best of our knowledge this is the first time that the abstract form \refeqn{equation_problem} of the problem has been  recognized (since the previous Authors consider particular forms of the functions $F^i_k$) and that  the problem has been tackled  by using the geometrical framework of differential constraints for PDEs. In particular,  the use of the differential constraints method permits to consider, from a theoretical point of view, a very general form for the functions  $F^k_i$  and allows us to obtain a useful  algorithm  for the explicit computation of the solutions to  equations \eqref{equation_problem} (see Subsection \ref{subsection_general_algorithm}).\\

On the other hand, the problem of finding differential constraints for  PDEs of the form  \eqref{equation_problem}  is an interesting challenge in itself.
In order to address this issue we provide a geometrical framework for the description  of differential constraints method which allows us to simplify the formulation of the problem. In particular, we associate with any autonomous
evolution equation of the form $\partial_t(u^k)=F^k_i(x,u,u_{\sigma})$ an evolution vector field $V_{F_i}$ on the space of infinite jets $\jinf(M,N)$ of the
functions from $M$ into $N$. In this setting the time independent overdetermined system $\mathcal{I}$, or equivalently the time independent function $K$, can be described
as a particular finite dimensional submanifold $\mathcal{K}$ of $\jinf(M,N)$ and  we prove that $\mathcal{I}$ is a differential
constraint for the system \refeqn{equation_problem} for any  $c^i(t)$ if and only if   $V_{F_i} \in T\mathcal{K}$ $\forall i=1, \ldots s$. \\
This geometrical reformulation gives new insight into the problem of finding differential constraints. First of all it provides  a general and powerful  method for dealing  with many different evolution equations  which have been previously faced with different techniques and in different frameworks.\\
The second important result following by our general approach is the derivation of some necessary conditions for the existence of differential constraints for systems of the form \eqref{equation_problem}. These conditions form an infinite dimensional analogous of necessary condition of the well-known Frobenius theorem.
Indeed the existence of a differential constraint  for \refeqn{equation_problem}  ensures that  the
functions $F_i$, restricted on some subset $\mathcal{K}$ of $\jinf(M,N)$, form a module with respect to the Lie brackets $[F_i,F_j]$ induced by the vector fields Lie brackets $[V_{F_i},V_{F_j}]$. This imposes severe conditions on  $F_i$:
in particular, if $\mathcal{K}=\jinf(M,N)$,  $F_i$ must form a Lie algebra on $\jinf(M,N)$.  \\
Moreover, in order to obtain a sufficient condition for the existence of differential constraints for systems of the form \refeqn{equation_problem}, we introduce the notion of characteristic flow for a general evolution vector field. In particular this definition, generalizing to  higher order the standard notion of characteristic of a first order scalar evolution equation, is an infinity dimensional  analogous  of the flow of a vector field in finite dimensional framework. Thereafter we divide the functions $F_i$ in two sets $H_i$ and $G_j$ so that the functions $G_j$ form a Lie algebra and their evolution vector fields $V_{G_j}$
admit   generalized characteristic flow.  Under these assumptions we prove that, if the vector fields  $V_{H_i}$ admit a differential constraint $\mathcal{H}$,
then the complete set of  $V_{F_i}$ admits a differential constraint $\mathcal{K}$ that can be explicitly computed starting from $\mathcal{H}$  and using the
characteristic flows of $G_i$.
In addition  we provide a generalization of this theorem to the case of $H_i,G_i$ forming a finite dimensional  Lie algebra on a real analytic  submanifold $\mathcal{H}$ of $\jinf(M,N)$   which is also a differential constraint for $V_{H_i}$.\\
The  proofs of both these results  are constructive so that we can compute explicitly the differential constraints in many interesting examples.
 In particular some of the examples have been chosen in order to show the  flexibility and the effectiveness  of our geometrical approach with respect to the standard differential constraints method (see Subsections \ref{subsection_transport_equation}, \ref{subsection_integrable} and also \cite{DM2016} for other examples on the same topics). Other examples instead have been proposed for their relevance in applied mathematical problems such as non-linear stochastic filtering theory (see Subsection \ref{subsection_heat_equation}), stochastic perturbation of integrable equations (see Subsections \ref{subsection_integrable} and \ref{subsection_KdV}) and non-isospectral deformation of integrable systems (see Subsection \ref{subsection_transport_equation} and \ref{subsection_KdV}).\\

The paper is organized as follows:  after recalling some basic facts on the geometry of $\jinf(M,N)$ in Section \ref{prel},   in Section \ref{PDEreduction and diff constraints} we provide a geometric characterization of differential constraints   for systems of evolution PDEs of the form \refeqn{equation_problem}.  Hence, in Section \ref{Section Characteristic}, we discuss the problem of characteristics in  $\jinf(M,N)$ and in Section \ref{Building differential constraints} we apply previous results to the explicit construction of differential constraints (or reduction functions) for  evolution PDEs of the form \refeqn{equation_problem}. Finally, in Section \ref{section_examples}, we apply the results of the previous sections  to several  explicit examples.

\section{Preliminaries}\label{prel}

In this section we collect  some basic facts about (infinite) jet bundles in order to provide the necessary geometric tools for our aims.\\

Given the trivial fiber bundle $M \times N \rightarrow M$, where $M,N$ are  open sets of $\mathbb{R}^m$ and $\mathbb{R}^n$ respectively, we denote by $x^i$
the cartesian coordinate system of $M$  and  by $x^i, u^j$ the cartesian coordinate system of $M \times N$. The coordinates $x^i,u^j$ induce a global
coordinate system  $x^i,u^j,u_{\sigma}^j$ on  the $k$-order jet bundle $J^k(M,N)$,  where $\sigma=(\sigma_1,...,\sigma_m) \in \mathbb{N}_0^m, \vert \sigma\vert \le k$
is a multi-index denoting the number $\sigma_l$ of derivatives of $u^j$ with respect $x^l$. \\
It is well known that  $J^k(M,N)$ admits a natural structure of finite-dimensional smooth vector bundle on $M$ and,  considering the natural projections
$\pi_{k,h}:J^k(M,N) \rightarrow J^h(M,N)$, it is possible to define the inverse limit $\jinf(M,N)$ of the sequence
$$M \stackrel{\pi_0}{\leftarrow} M \times N=J^0(M,N) \stackrel{\pi_{1,0}}{\leftarrow} J^1(M,N) \stackrel{\pi_{2,1}}{\leftarrow} ...
\stackrel{\pi_{k,k-1}}{\leftarrow} J^k(M,N) \stackrel{\pi_{k+1,k}}{\leftarrow} ... $$
Unfortunately the space $\jinf(M,N)$ is not a finite-dimensional manifold, being the inverse limit of a sequence of  spaces of increasing dimension.
From a topological point of view $\jinf(M,N)$ is a Fr\'echet manifold modeled on $\mathbb{R}^{\infty}$ (see \cite{Saunders}) and any open set $U$ of $\jinf(M,N)$
contains a set of the form $U'=\pi^{-1}_k(V)$ for some $k \in \mathbb{N}$ and some open set $V \subset J^k(M,N)$. This means that, for any $\sigma$ with $|\sigma|>k$,
the coordinates $u^i_{\sigma}$ vary in the whole $\mathbb{R}$. \\
Furthermore,  the differential structure of $J^k(M,N)$ induces a  natural  differential structure in $\jinf(M,N)$
(see \cite{KrasVin2,KrasVin} for a complete description). \\
Hereafter we denote by $\mathcal{F}_k$ the algebra of real-valued smooth functions defined on $J^k(M,N)$   and we deduce  the differential structure of $\jinf(M,N)$ from the geometric smooth algebra $\mathcal{F}$ defined as the direct limit of the sequence
$$\cinf(M) \stackrel{\pi^*_0}{\rightarrow} \mathcal{F}_0 \stackrel{\pi^*_{1,0}}{\rightarrow} ... \stackrel{\pi^*_{k,k-1}}{\rightarrow} \mathcal{F}_k \stackrel{\pi^*_{k+1,k}}{\rightarrow} ...$$
Let $\mathcal{G} \subset \mathcal{F}$ be a finitely generated subalgebra of $\mathcal{F}$,  which means that there are a finite number of functions $g_1,...,g_l \in \mathcal{G}$ such that any  $g \in \mathcal{G}$ is of the form $g=G(g_1,...,g_l)$ for a unique smooth function $G$. It is possible to associate with $\mathcal{G}$ in a unique way a finite dimensional manifold $M_{\mathcal{G}}$ (see \cite{Jet}). For this reason in the following we identify the subalgebra $\mathcal{G}$ with the manifold $M_{\mathcal{G}}$ such that $\mathcal{G}=\cinf(M_{\mathcal{G}})$. The inclusion $i:\mathcal{G} \rightarrow \mathcal{F}$ induces a unique projection $\tilde{\pi}:\jinf(M,N) \rightarrow M_{\mathcal{G}}$ such that $\tilde{\pi}^*=i$.\\
The algebra  $\mathcal{F}$ is a graded algebra and a vector field $X$ on $\jinf(M,N)$ is a derivation  on the space  $\mathcal{F}$ which respects the order. \\
It is well known that the Cartan  distribution $\mathcal{C}$ on  $\jinf(M,N)$  generated by the vector fields
$$D_i=\partial_{x^i}+\sum_{k,\sigma} u^k_{\sigma+1_i} \partial_{u^k_{\sigma}}$$
defines an integrable connection on $\jinf(M,N)$. Hence, for any vector field $X$ on $\jinf(M,N)$,
we can write
$$X=X_v+X_h,$$
where $X_h \in \mathcal{C}$ and $X_v$ is a  vertical vector field i.e. $X_v(x^i)=0$ for any $i=1,...,m$.\\
A vector field $X$ on $\jinf(M,N)$ is  a symmetry of the Cartan distribution if $[X,\mathcal{C}] \subset \mathcal{C}$.
We remark that if $X=X_v+X_h$ is a symmetry of $\mathcal{C}$ then also $X_h$ and $X_v$ are symmetries of $\mathcal{C}$.
\begin{definition}\label{definition evolution vf}
A vertical vector field $X$ that is a symmetry of $\mathcal{C}$ is called \emph{evolution vector field}.
If $X$ is an evolution vector field there exists a unique smooth function $F:\jinf(M,N) \rightarrow \mathbb{R}^n$  such that
\begin{equation}\label{evolutionVF}
X=\sum_{i,\sigma} D^{\sigma}(f^i)\partial_{u^i_{\sigma}},
\end{equation}
where $F=(f^1,...,f^n)$ and $D^{\sigma}=(D_1)^{\sigma_1}...(D_m)^{\sigma_m}$. We call $F$ the \emph{generator of the evolution vector field} $X$ and we write $X=V_F$.\\
\end{definition}
If $V_F$ and $V_G$ are two evolution vector fields, then $[V_F,V_G]$ is also an evolution vector field and there exists an unique $H \in \mathcal{F}^n =\mathcal{F} \times ... \times \mathcal{F}$ such that $[V_F,V_G]=V_H$. Therefore the commutator between evolution vector fields induces a commutator in $\mathcal{F}^n$ and we define $[F,G]=H$  when  $[V_F,V_G]=V_H$.\\

\noindent We conclude this section recalling that  a subset $\mathcal{E}$  of $\jinf(M,N)$ is
a submanifold of $\jinf(M,N)$ if for any $p \in \mathcal{E}$ there exists a neighborhood $U_p$ of $p$ such that $\pi_h(\mathcal{E} \cap U_p)$ is a
submanifold of $J^h(M,N)$ for $h>H_p$.\\
If, for any $p \in \mathcal{E}$, all the submanifolds $\pi_h(\mathcal{E} \cap U_p)$ with $h>H_p$   have the
 same dimension $L$, we say that $\mathcal{E}$ is an $L$-dimensional submanifold of $\jinf(M,N)$.
In particular, given an $L$-dimensional  manifold $B$ and a smooth immersion $K:B \rightarrow \jinf(M,N)$, for any point $y \in B$ there exists a neighborhood $V$ of $p$ such that $K(V)$ is a finite dimensional submanifold of $\jinf(M,N)$.
\begin{definition}\label{definition canonical_submanifold}
A submanifold $\mathcal{E}$ of $\jinf(M,N)$ such that $\mathcal{C} \subset T\mathcal{E}$ is said \emph{canonical submanifold}.  Any canonical submanifold $\mathcal{E}$ can  be locally described as the set of zeros of a finite number of smooth independent functions $f_1,...,f_L$ and of all their differential consequences $D^{\sigma}(f_i)$.\\
\end{definition}
\bigskip

\section{ Differential constraints and PDEs reduction }\label{PDEreduction and diff constraints}

In this section we propose a geometric reformulation of differential constraints method for a  family of  evolution PDEs. In particular we introduce  the notion of reduction function and we discuss its relation with differential constraints seen as finite dimensional submanifolds of $\jinf(M,N)$.

\subsection{Differential constraints: from the reduction function to the submanifold}

Let us consider a system of evolution PDEs  of the form
\begin{equation}\label{equation_evolution}
\partial_t(u^k)=\sum_{i=1}^s c^i(t) F^k_i(x,u,u_{\sigma}),
\end{equation}
where $F_i \in \mathcal{F}^n$ and $k=1, \ldots, n$.
\begin{definition}
Given a system of  evolution PDEs of the form \refeqn{equation_evolution} and
 an $L$-dimensional  manifold $B$, let
$$K:M \times B \rightarrow N$$
be a smooth function. We say that $K$ is
a \emph{reduction function} for  \refeqn{equation_evolution} if there are smooth functions $f^j$ such that
$$U(x,t)=K(x,b^1(t),...,b^L(t))$$
is a solution to the system \refeqn{equation_evolution} for any  $c^1(t),...,c^s(t)$ if and only if $b^j(t)$ are solutions to the system of ODEs
$$\partial_t(b^j)=f^j(t,b,c^1(t),...,c^s(t)).$$
\end{definition}
In this framework it is natural to associate  with  $K$
a function $R^K:M \times B \rightarrow \jinf(M,N)$ that is the lift of  $K$ to $\jinf(M,N)$. In particular, if $\pi_0:\jinf(M,N) \rightarrow J^0(M,N)=M \times N$ denotes the natural projection of $\jinf(M,N)$ onto $J^0(M,N)$, the function  $R^K $ satisfies
\begin{eqnarray*}
\pi_{0} \circ R^K(x,b)=(x,K(x,b))\\
R^K_*(\partial_{x^i})=D_i,
\end{eqnarray*}
 and in coordinates we have
$$(u^i_{\sigma}\circ R^K)(x,b)=\partial^{\sigma}_x(K^i(x,b)).$$
If $R^K$ is an immersion,  then $\mathcal{K}=R^K(M \times B)$ is (possibly restricting  $B$)  a finite dimensional submanifold of $\jinf(M,N)$. Furthermore the following theorem holds.

\begin{theorem}\label{theorem_reduction}
Let $K:M \times B \rightarrow N$ be a smooth function and $\mathcal{K}=R^K(M \times B)$.
Then $K$ is a reduction function for the system \refeqn{equation_evolution} if and only if
$V_{F_i} \in T\mathcal{K}$, $\forall i=1, \ldots, s$.
\end{theorem}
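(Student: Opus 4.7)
The plan is to translate the tangency $V_{F_i}\in T\mathcal{K}$ into a functional identity via the lift $R^K$, and then compare it directly with the result of substituting the ansatz $U(x,t)=K(x,b(t))$ into \refeqn{equation_evolution}.

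First I would compute $T_{R^K(x,b)}\mathcal{K}$: since $R^K$ is an immersion it is spanned by $R^K_*(\partial_{x^j})=D_j$ (horizontal) and by the vertical vectors $R^K_*(\partial_{b^l})=\sum_{k,\sigma}\partial_{b^l}(\partial_x^\sigma K^k)\,\partial_{u^k_\sigma}$. Because $V_{F_i}$ is vertical, the tangency condition is equivalent to the existence of smooth functions $g_i^l(x,b)$ with $V_{F_i}|_{R^K(x,b)}=\sum_l g_i^l(x,b)\,R^K_*(\partial_{b^l})$. Pulling this back to $M\times B$ via the diffeomorphism $R^K$ onto $\mathcal{K}$ yields $\tilde V_{F_i}=\sum_l g_i^l(x,b)\,\partial_{b^l}$; since $V_{F_i}$ is a symmetry of the Cartan distribution and $R^{K,-1}_*D_j=\partial_{x^j}$, the bracket $[\tilde V_{F_i},\partial_{x^j}]=-\sum_l(\partial_{x^j}g_i^l)\partial_{b^l}$ must lie in $\spann(\partial_{x^l})$, forcing $g_i^l=g_i^l(b)$. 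Matching the $\partial_{u^k}$-components then gives the identity
$$F_i^k\bigl(x,K(x,b),\partial_x^\sigma K(x,b)\bigr)=\sum_{l=1}^L g_i^l(b)\,\partial_{b^l}K^k(x,b),\qquad (\ast)$$
and conversely $(\ast)$ implies full tangency, because applying $\partial_x^\sigma$ to $(\ast)$ and using $D^\sigma f_i^k\circ R^K=\partial_x^\sigma(f_i^k\circ R^K)$ recovers the relation on every jet coordinate.

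For the implication $V_{F_i}\in T\mathcal{K}\Rightarrow K$ is a reduction function, I would set $f^l(t,b,c):=\sum_i c^i\,g_i^l(b)$ and substitute $U(x,t)=K(x,b(t))$ into \refeqn{equation_evolution}. The left-hand side becomes $\sum_l \partial_{b^l}K^k(x,b(t))\,\dot b^l(t)$, while by $(\ast)$ the right-hand side equals $\sum_l \partial_{b^l}K^k(x,b(t))\,\sum_i c^i(t)\,g_i^l(b(t))$. Since the vectors $R^K_*(\partial_{b^l})$ are pointwise independent in $\jinf(M,N)$ by immersion, this identity in jet coordinates is equivalent to $\dot b^l(t)=f^l(t,b(t),c(t))$, which is the desired ODE.

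For the converse, suppose $K$ is a reduction function with corresponding $f^l$. Fix $(t_0,b_0)\in\mathbb{R}\times B$ and $\bar c\in\mathbb{R}^s$, choose smooth $c^i(t)$ with $c^i(t_0)=\bar c_i$, and let $b(t)$ solve $\dot b=f(t,b,c(t))$ with $b(t_0)=b_0$. Evaluating \refeqn{equation_evolution} at $t=t_0$ yields
$$\sum_l \partial_{b^l}K^k(x,b_0)\,f^l(t_0,b_0,\bar c)=\sum_i \bar c_i\,F_i^k\bigl(x,K(x,b_0),\partial_x^\sigma K(x,b_0)\bigr).$$
Since the right-hand side is $\mathbb{R}$-linear in $\bar c$ and independent of $t_0$, pointwise independence of the $\partial_{b^l}K^k$ forces $f^l(t,b,c)=\sum_i c^i g_i^l(b)$; specialising $\bar c$ to the standard basis of $\mathbb{R}^s$ recovers $(\ast)$, hence $V_{F_i}\in T\mathcal{K}$.

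The main technical point will be exploiting the immersion hypothesis twice: once, combined with the Cartan-symmetry property of $V_{F_i}$, to force $g_i^l$ to be $x$-independent, and a second time to conclude pointwise independence of $R^K_*(\partial_{b^l})$, which is what allows the PDE condition to be peeled off as an ODE. Everything else is essentially the chain rule $D_j\circ R^K=R^K\circ\partial_{x^j}$ applied at each jet level.
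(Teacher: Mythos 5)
Your proposal is correct and follows essentially the same route as the paper's proof: both directions hinge on the identity $F_i^k\circ R^K=\sum_l g_i^l(b)\,\partial_{b^l}K^k$, obtained by specialising $c^i=\delta_{p,i}$, propagated to all jet orders by applying $\partial_x^\sigma$ (i.e.\ $D^\sigma(F_i^k)\circ R^K=\partial_x^\sigma(F_i^k\circ R^K)$), with the $x$-independence of the coefficients $g_i^l$ coming from $R^K_*(\partial_{x^j})=D_j$ and the commutation of $V_{F_i}$ with the $D_j$. You merely supply more detail than the paper at the two points it leaves implicit (the pointwise extraction of the identity from the definition of reduction function, and the $x$-independence argument).
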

\begin{proof}
If $K$ is a reduction function for the system \refeqn{equation_evolution}, we have
$$\sum_j f^j(x,b,c)\partial_{b^j}(K^l)(x,b)=\sum_{i=1}^s c^i(t)F^l_i(x,K(x,b),\partial^{\sigma}_x(K)(x,b)).$$
Choosing $c^i=\delta_{1,i}$ and applying   $\partial^{\sigma}_x$ to both sides of the previous equation
we get
$$\sum_j f^j \partial_{b^j}(u^l_{\sigma} \circ (R^K))=D^{\sigma}(F^l_1) \circ R^K.$$
Since $R^K_*(\partial_{b^i})=\partial_{b^i}(R^K) \in T\mathcal{K}$ and  $V_{F_1}(u^l_{\sigma})|_{\mathcal{K}}=D^{\sigma}(F^l_1)|_{\mathcal{K}}$,  we have $V_{F_1} \in T\mathcal{K}$. Choosing $c^i=\delta_{p,i}$ we obtain $V_{F_p}\in T\mathcal{K}$.\\
Conversely suppose that $V_{F_i} \in T\mathcal{K}$. Since $V_{F_i}$ are vertical and the vertical vector fields of $T\mathcal{K}$ are generated by $\partial_{b^i}(R^K)$ there exist suitable functions $g_i^j:M \times B \rightarrow \mathbb{R}$ such that
$$V_{F_j}=\sum_i g^i_j\partial_{b^i}(R^K).$$
It is easy to show that the functions $g^j_i$ do not depend on $x \in M$ being   $R^K_*(\partial_{x^i})=D_i$ so the thesis follows choosing
$$f^j(b,c^1,...,c^N)=\sum_i c^i g^j_i(b).$$
\hfill\end{proof}

\noindent  Since the submanifold $\mathcal{K}=R^K(M \times B)$ is  a finite dimensional canonical submanifold, $\mathcal{K}$ can  be locally described as the set of zeros of a finite number of smooth independent functions $f_1,...,f_L$ and of all their differential consequences $D^{\sigma}(f_i)$. Therefore
 a necessary and sufficient condition for $V_F \in T\mathcal{K}$ is  $V_F(D^{\sigma}(f_i))|_{\mathcal{K}}=0$ but, since $D_i$ and $V_F$ commute
and $D_i \in T\mathcal{K}$, it is sufficient to check that $V_F(f_i)|_{\mathcal{K}}=0$.\\

\begin{remark}
In the proof of Theorem \ref{theorem_reduction} the hypothesis that $\mathcal{K}$ is a submanifold of $\jinf(M,N)$ is not necessary.
Indeed  we prove  that $V_{F_i} \in \operatorname{Image}(TR^K)$ even if $R^K$ is not an immersion (and so $\mathcal{K}$ is not a submanifold).
Even so, for  the sake of simplicity,  in the following we always consider submanifolds $\mathcal{K}$ of $\jinf(M,N)$. \\
In particular, if $K$ is a real analytic function with respect the $x^i$ variables, a necessary and sufficient condition for $\mathcal{K}$ to be a submanifold
is that $\partial_{b^i}(K(\cdot,b))$  are linearly independent as functions from $M$ into $N$.
In the smooth case it can happen that $\partial_{b^i}(K(\cdot,b))$ are linearly independent but $R^K$ is not an immersion.
However, this situations can be considered as exceptional:  indeed the set of $K$ such that $R^K$ is an immersion is an open everywhere dense subset of $\cinf(M,N)$ with respect the Whitney topology (see \cite{GolGui}).
\end{remark}

\begin{remark}\label{remark_one_equation}
An interesting consequence of Theorem \ref{theorem_reduction} is that, if $s=1$,  any solution $U(x,t):M \times \mathbb{R} \rightarrow N$
 to the system \refeqn{equation_evolution} for $c_1=1$ is a reduction function. Indeed in this case we have that
$$\partial_b(U(x,b))=F_1(x,U,U_{\sigma}),$$
so $\partial^{\sigma'}_x(\partial_b(U(x,b)))=D_{\sigma'}(F_1)(x,U,U_{\sigma})$. Hence $R^U_*(\partial_b)=V_{F_1}$ and  equation \eqref{equation_evolution} with $s=1$  becomes and ODE for $b$  of the form
$$\partial_t(b)(t)=c^1(t).$$
\end{remark}

\subsection{Differential constraints: from the submanifold to the  reduction function}

In this section we discuss the problem of computing  the reduction function  $K$ starting from the knowledge of  a suitable canonical submanifold $\mathcal{K}$. The resulting algorithm will be used in the examples of Section \ref{section_examples}.

\begin{definition}
A finite dimensional canonical submanifold $\mathcal{K}$ of $\jinf(M,N)$ is a \emph{differential constraint} for  equation \refeqn{equation_evolution} if $V_{F_i} \in T\mathcal{K}$
\end{definition}
In order to prove that with  any differential constraint $\mathcal{K}$  for \refeqn{equation_evolution} it is possible to associate a reduction function $K$, we need to recall the following technical result.
\begin{theorem}\label{theorem_submanifold}
Let $\mathcal{H}$ be an $m$-dimensional canonical submanifold of $\jinf(M,N)$ (i.e. $T\mathcal{H}=\mathcal{C}$). Denoting by  $\pi$
 the canonical projection $\pi: \jinf(M,N) \to M$, if   $\pi(\mathcal{H})=V \subset M$, then there exists a unique smooth function $U:V \rightarrow N$ such that $R^U(V)=\mathcal{H}$.
\end{theorem}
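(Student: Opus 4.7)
The plan is to produce $U$ by inverting the base projection $\pi|_{\mathcal H}$ and then reading off the $N$-component. I would carry this out in three steps, working locally first and globalising afterwards.

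First I would show that $\pi|_{\mathcal H}\colon \mathcal H\to V$ is a local diffeomorphism. The Cartan distribution $\mathcal C$ is spanned by the total derivatives $D_i=\partial_{x^i}+\sum_{k,\sigma}u^k_{\sigma+1_i}\partial_{u^k_\sigma}$, and $\pi_*(D_i)=\partial_{x^i}$, so $\pi_*$ restricts to a pointwise isomorphism $\mathcal C\to TM$. Since the hypothesis says $T\mathcal H=\mathcal C|_{\mathcal H}$ and $\dim \mathcal H=m=\dim M$, the differential of $\pi|_{\mathcal H}$ is an isomorphism at every point; combined with $\pi(\mathcal H)=V$ this exhibits $\pi|_{\mathcal H}$ as a local diffeomorphism onto $V$.

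Next I would define $U$ locally and verify that it lifts correctly to $\mathcal H$. For each $x_0\in V$, pick $p\in\mathcal H$ with $\pi(p)=x_0$ and let $s\colon W\to \mathcal H$ be a smooth local inverse of $\pi|_{\mathcal H}$ on some neighbourhood $W\ni x_0$. Set $U(x)=\mathrm{pr}_N\bigl(\pi_0(s(x))\bigr)$. To see that $s$ coincides with the prolongation $R^U$ on $W$, it suffices to check the two defining properties of $R^U$: the identity $\pi_0\circ s(x)=(x,U(x))$ holds by construction, and since $s(W)\subset\mathcal H$ we have $s_*(\partial_{x^i})\in T\mathcal H=\mathcal C$; being a lift of $\partial_{x^i}$ inside $\mathcal C$, it must equal $D_i|_{s(x)}$. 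The coordinate formula $(u^k_\sigma\circ R^U)(x)=\partial_x^\sigma(U^k)(x)$ then follows inductively by applying the $D_i$. Thus $s=R^U$ on $W$ and so $R^U(W)=s(W)\subset\mathcal H$.

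Finally I would globalise and argue uniqueness. Two local sections $s_1,s_2$ of $\pi|_{\mathcal H}$ that agree at one point must agree on the connected component of their common domain, because they are both integral manifolds of the flat Cartan connection through the same point, and integral leaves of an integrable distribution are unique. This shows the local definitions of $U$ fit together into a single smooth $U\colon V\to N$ independent of choices, and that $R^U(V)$ is an open canonical subset of $\mathcal H$ of full dimension; by dimension count and the equality $\pi(R^U(V))=V=\pi(\mathcal H)$ it coincides with $\mathcal H$. Uniqueness of $U$ is immediate since $U(x)$ is recovered from $\mathcal H$ as $\mathrm{pr}_N(\pi_0(R^U(x)))$.

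The main obstacle is the globalisation in the third step: I must rule out monodromy of the local sections of $\pi|_{\mathcal H}$, and this is precisely where the hypothesis $T\mathcal H=\mathcal C$ (rather than mere inclusion) is used, via the uniqueness of Cartan integral leaves. Everything else is either pointwise linear algebra or the defining identities of the prolongation map $R^U$.
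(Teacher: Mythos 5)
The paper does not actually prove this statement --- it simply cites Krasil'shchik--Vinogradov (Chapter 4, Proposition 2.3) --- so you have supplied an argument where the paper supplies none. Your first two steps are correct and are essentially the standard proof of that proposition: $\pi_*$ maps $\mathcal{C}$ isomorphically onto $TM$ fibrewise, so $T\mathcal{H}=\mathcal{C}$ forces $\pi|_{\mathcal{H}}$ to be a local diffeomorphism; any local section $s$ of $\pi|_{\mathcal{H}}$ then satisfies $s_*(\partial_{x^i})=D_i$ (the unique vector of $\mathcal{C}$ lifting $\partial_{x^i}$), whence $u^k_{\sigma}\circ s=\partial^{\sigma}_x(u^k\circ s)$ by induction and $s=R^U$ with $U=\mathrm{pr}_N\circ\pi_0\circ s$.

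The gap is in the globalisation. Uniqueness of integral leaves through a \emph{common} point only shows that the germ of $U$ at $x_0$ is independent of the choice of local inverse once a point $p\in\pi^{-1}(x_0)\cap\mathcal{H}$ has been fixed; it does not show that $\pi|_{\mathcal{H}}$ is injective, which is what you actually need both for $U$ to be well defined and for the final identity $R^U(V)=\mathcal{H}$. The concluding dimension count does not close this: two disjoint graphs $R^{U_1}(V)\cup R^{U_2}(V)$ form an $m$-dimensional canonical submanifold with tangent space $\mathcal{C}$ projecting onto all of $V$, and over a non-simply-connected $V$ even a single connected leaf can have nontrivial monodromy (take $V=\mathbb{R}^2\setminus\{0\}$ and the leaf through the infinite jet of a branch of the polar angle: every derivative coordinate $u_{\sigma}$ with $|\sigma|\ge 1$ is a single-valued rational function of $x,y$, but $u$ itself shifts by $2\pi$ around the origin, so the leaf is connected and is not a graph). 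Hence injectivity of $\pi|_{\mathcal{H}}$ cannot be derived from the stated hypotheses; it must be assumed, or the statement read locally. This is arguably a defect of the theorem as stated rather than of your strategy --- the paper only ever invokes it locally, inside Theorem \ref{theorem_connection}, whose accompanying remark flags exactly this monodromy obstruction --- and your first two steps already constitute a complete proof of that local version.
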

\begin{proof}
 A proof of this result can be found in \cite{KrasVin}, Chapter 4, Proposition 2.3.
\hfill\end{proof}

\begin{theorem}\label{theorem_connection}
 Let $\mathcal{K}$ be a finite dimensional canonical submanifold of $\jinf(M,N)$  which is a differential constraint for  equation \refeqn{equation_evolution}.
Then, for any point $p \in \mathcal{K}$, there exist a neighborhood $U \subset \jinf(M,N)$ of $p$ and a function $K:V \times B \rightarrow N$, where $V \subset \pi(U)$, such that $R^K(V \times B)=\mathcal{K} \cap U \cap \pi^{-1}(V)$ and  $K$ is a reduction function for the system \refeqn{equation_evolution}.
\end{theorem}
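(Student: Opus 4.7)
The plan is to exploit the fact that $\mathcal{K}$, being a finite-dimensional canonical submanifold, is foliated by Cartan leaves, extract from each leaf a smooth section $U_b\colon V\to N$ via Theorem \ref{theorem_submanifold}, and then assemble these sections into a single smooth map $K(x,b)$ whose infinite jet lift has image $\mathcal{K}$. The reduction-function property will then come for free from Theorem \ref{theorem_reduction} applied to $K$.

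The first step is to produce well-behaved local coordinates on $\mathcal{K}$ near $p$. Since $\mathcal{K}$ is a finite-dimensional canonical submanifold, the Cartan distribution $\mathcal{C}$ is an integrable rank-$m$ subdistribution of $T\mathcal{K}$ spanned by $D_1,\ldots,D_m$, and the projection $\pi\colon\jinf(M,N)\to M$ restricts to a submersion on $\mathcal{K}$ because $\pi_{*}(D_i)=\partial_{x^i}$. Setting $L=\dim\mathcal{K}-m$ and applying the (finite-dimensional) Frobenius theorem to $\mathcal{C}\vert_{\mathcal{K}}$, I would choose, after shrinking $\jinf(M,N)$ to a neighborhood $U$ of $p$, a chart $(x^1,\ldots,x^m,b^1,\ldots,b^L)$ on $\mathcal{K}\cap U$ in which the $b^j$ are functionally independent first integrals of $\mathcal{C}\vert_{\mathcal{K}}$ and the $x^i$ are pullbacks of the base coordinates of $M$. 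Writing $V=\pi(\mathcal{K}\cap U)$ and letting $B\subset\mathbb{R}^L$ be the range of $(b^1,\ldots,b^L)$, this gives a diffeomorphism $\mathcal{K}\cap U\cap\pi^{-1}(V)\cong V\times B$ in which each slice $\mathcal{H}_b:=\{b=\mathrm{const}\}$ is an $m$-dimensional canonical submanifold with $T\mathcal{H}_b=\mathcal{C}$ and $\pi(\mathcal{H}_b)=V$.

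The second step applies Theorem \ref{theorem_submanifold} to every $\mathcal{H}_b$, producing a unique smooth function $U_b\colon V\to N$ such that $R^{U_b}(V)=\mathcal{H}_b$. Defining $K\colon V\times B\to N$ by $K(x,b):=U_b(x)$, joint smoothness of $K$ in $(x,b)$ is immediate from the identification $K^j(x,b)=u^j\circ\varphi(x,b)$, where $\varphi\colon V\times B\to\mathcal{K}\cap U$ is the inverse of the chart constructed above; consequently $R^K$ is smooth and, by construction, $R^K(V\times B)=\mathcal{K}\cap U\cap\pi^{-1}(V)$. Finally, since $V_{F_i}\in T\mathcal{K}$ by hypothesis and $\mathcal{K}=R^K(V\times B)$ locally, Theorem \ref{theorem_reduction} shows that $K$ is a reduction function for \refeqn{equation_evolution}.

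The delicate step is the first one: I must verify that the coordinates $(x^i,b^j)$ on the finite-dimensional submanifold $\mathcal{K}$ can be chosen so that the $x^i$ are really the restrictions of the ambient base coordinates while the $b^j$ are transverse to the Cartan foliation. This hinges on combining two ingredients: the canonicity condition $D_i\in T\mathcal{K}$, which makes $\pi\vert_{\mathcal{K}}$ a submersion onto $V$, and the Frobenius integrability of $\mathcal{C}\vert_{\mathcal{K}}$, which produces the independent first integrals $b^j$ on a neighborhood of $p$. Once the chart is available, the remainder of the argument is bookkeeping with the previously established Theorems \ref{theorem_reduction} and \ref{theorem_submanifold}.
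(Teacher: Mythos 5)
Your proposal is correct and follows essentially the same route as the paper: the paper phrases your Frobenius/first-integrals step as choosing a local trivialization of the flat connection $\mathcal{C}$ on the fibred manifold $(\mathcal{K},\pi,M)$, and then, exactly as you do, applies Theorem \ref{theorem_submanifold} leaf by leaf to build $K$ and Theorem \ref{theorem_reduction} to conclude it is a reduction function. The only difference is terminological, not mathematical.
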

\begin{proof}
The manifold $\mathcal{K}$ with respect the projection $\pi$ is a finite dimensional fibred manifold with base $M$. Moreover, since $D_i \in \mathcal{K}$,
the Cartan distribution $\mathcal{C}$ is a finite dimensional flat connection of  $(\mathcal{K}, \pi, M)$ and,
 for any $p_0 \in \mathcal{K}$, there exist a neighborhood $B \subset \pi^{-1}(x_0)$ of $p_0$ (where $x_0 =\pi(p_0)$) and a local trivialization
$R:V \times B \rightarrow \mathcal{K} \subset \jinf(M,N)$ of  $\mathcal{C}$.
Therefore Theorem \ref{theorem_submanifold} ensures that there exists a smooth function $K:V \times B \rightarrow N$ such that $R^K=R$ and
Theorem \ref{theorem_reduction} guarantees that $K$ is also a reduction function for the system \refeqn{equation_evolution}.
\hfill \end{proof}

\begin{remark}
There are two obstructions for   a global version of  Theorem \ref{theorem_connection}.
The first one is that, if $M$ is not simply connected,   $\mathcal{C}$ may admit only a local trivialization and not a global one and
the second  is that, if $\mathcal{C}$ is a non-linear connection on $\mathcal{K}$, it may not admit a global trivialization, since
 non-linear ODEs can blow-up. Obviously, if $M$ is simply connected and $\mathcal{C}|_{\mathcal{K}}$ has at most linear grow for some coordinate system on $\mathcal{K}$, Theorem \ref{theorem_connection} admits a global version.
\end{remark}
Given an $L$-dimensional canonical submanifold $\mathcal{K}$ which is a differential constraint for \eqref{equation_evolution},
Theorem \ref{theorem_connection} provides an explicit construction procedure for  the function $K$  and for the system of   ODEs for the parameters $b^i$.\\
Indeed let $(x^i,y^j)$ be an adapted coordinate system for the fibred manifold $(\mathcal{K},\pi,M)$, which means that $x^i$ is the standard coordinate
system on $M$ and $y^j$ is an adapted coordinate system for the fiber $\pi^{-1}(x)$, with $j=1,...,L-m$.
The coordinates $y^i$ can be chosen among the functions  $u^k,u^l_{\sigma}$ and, in general,
it is possible to find smooth functions $f^k(x,y),f^l_{\sigma}(x,y)$ such that
$u^k=f^k(x,y),u^l_{\sigma}=f^l_{\sigma}(x,y)$.\\
In the coordinate system $(x^i,y^j)$ the vector fields $D_i$ have the form
$$D_i=\partial_{x^i}+\sum_j \Psi^j_i(x,y) \partial_{y^j} \qquad (i=1, \ldots m)$$
and, $\forall x_0 \in M$ and $y_{x_0} \in \pi^{-1}({x_0})$, the solution  $y^j=\tilde{K}^j(x,y_{x_0})$ to the system
\begin{eqnarray*}
\partial_{x^i}(\tilde{K}^j(x,y_{x_0}))&=&\Psi^j_i(x,y_{x_0})\\
\tilde{K}^j(x_0,y_{x_0})&=&y^j_{x_0}
\end{eqnarray*}
provides the local trivialization of the flat connection $\mathcal{C}=\spann\{D_i\}$.
The explicit expression of the function $K$ can be obtained by
rewriting  $u^l$ as functions of $(x^i,y^j)$ leading to
$$K^l(x,y_{x_0})=f^l(x,\tilde{K}(x,y_{x_0})).$$
Moreover  the system of ODEs for the parameters $y^j_{x_0}$ can be obtained expressing $V_{F_i}$ in the coordinates $(x^i,y^j)$
$$V_{F_i}=\sum_j V_{F_i}(y^j) \partial_{y^j}= \sum_j \Phi^j_i(x,y)\partial_{y^j},$$
so that
$$\frac{dy^j_{x_0}}{dt}=\sum_i c^i(t)\Phi^j_i(x_0,y_{x_0}(t)).$$

\subsection{A necessary condition for existence of differential constraints}

The main goal of the general theory of  differential constraints is to find  a reduction function for a system of the form \eqref{equation_evolution}  for $s=1$. As proven in \cite{Olver1} (see also Remark \ref{remark_one_equation}) this problem admits an infinite number of solutions. On the other hand, the problem of existence of reduction functions (or  differential constraints) for $s>1$ is  completely different: actually,  in the general case, there are no reduction functions at all. \\
In this section we address the problem of existence of a reduction function for a system of the form \eqref{equation_evolution}  starting from the following remark.
\begin{remark}
If a system of evolution PDEs of the form \eqref{equation_evolution}  admits a differential constraint $\mathcal{K}$, the
set
$$\mathcal{S}=\spann\{V_{F_1},...,V_{F_s}\},$$
is a finite dimensional module on $\mathcal{K}$.
\end{remark}
The following Proposition provides a useful characterization for the vector fields $V_{F_i}$.

\begin{proposition}\label{proposition_lie_algebra}
Let $V_{F_1},...,V_{F_s}$ be evolution vector fields in $\jinf(M,N)$ such that $\mathcal{S}$ is an $s$-dimensional (formally) integrable distribution on a
submanifold $\mathcal{K}$ of $\jinf(M,N)$. If
$$[V_{F_i},V_{F_j}]=\sum_h \lambda_{i,j}^hV_{F_h}$$
then $D_l(\lambda_{i,j}^h)=0$ on $\mathcal{K}$.
\end{proposition}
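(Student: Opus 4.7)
The plan is to leverage the fact that evolution vector fields commute with the total derivatives $D_l$, and then apply the Jacobi identity to the commutator $[D_l, [V_{F_i}, V_{F_j}]]$.

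First I would verify the crucial commutation relation $[D_l, V_F] = 0$ for every evolution vector field $V_F$. This is a direct coordinate computation using Definition \ref{definition evolution vf}: on the generators one checks that $[D_l, V_F](x^k) = 0$ and $[D_l, V_F](u^i_\sigma) = D_l(D^\sigma(f^i)) - D^{\sigma+1_l}(f^i) = 0$, since $V_F(u^i_{\sigma+1_l}) = D^{\sigma+1_l}(f^i)$. This expresses the fact that an evolution vector field is a (vertical) symmetry of the Cartan distribution with trivial horizontal correction.

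Next I would combine this with the Jacobi identity applied to $D_l$, $V_{F_i}$, $V_{F_j}$:
\begin{equation*}
[D_l,[V_{F_i},V_{F_j}]]=[[D_l,V_{F_i}],V_{F_j}]+[V_{F_i},[D_l,V_{F_j}]]=0.
\end{equation*}
Thus $[D_l,[V_{F_i},V_{F_j}]]$ vanishes identically on $\jinf(M,N)$, and in particular on $\mathcal{K}$.

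Then I would exploit the assumed relation $[V_{F_i},V_{F_j}]=\sum_h \lambda_{i,j}^h V_{F_h}$ on $\mathcal{K}$. Since $\mathcal{K}$ is a canonical submanifold, $D_l$ is tangent to $\mathcal{K}$, and since all the $V_{F_h}$ are tangent to $\mathcal{K}$ by hypothesis, the Lie bracket can be computed intrinsically on $\mathcal{K}$ (choosing any smooth extension of the coefficients $\lambda_{i,j}^h$ off $\mathcal{K}$ is harmless, because $D_l$ only differentiates along $\mathcal{K}$). Expanding the Leibniz rule and using $[D_l,V_{F_h}]=0$:
\begin{equation*}
0=[D_l,[V_{F_i},V_{F_j}]]\Big|_{\mathcal{K}}=\sum_h D_l(\lambda_{i,j}^h)\,V_{F_h}\Big|_{\mathcal{K}}+\sum_h \lambda_{i,j}^h\,[D_l,V_{F_h}]\Big|_{\mathcal{K}}=\sum_h D_l(\lambda_{i,j}^h)\,V_{F_h}\Big|_{\mathcal{K}}.
\end{equation*}
Finally, since $\mathcal{S}$ is $s$-dimensional on $\mathcal{K}$, the evolution vector fields $V_{F_1},\dots,V_{F_s}$ are pointwise linearly independent along $\mathcal{K}$, which forces each coefficient $D_l(\lambda_{i,j}^h)$ to vanish on $\mathcal{K}$, giving the claim.

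The only delicate point — and essentially the whole content of the argument — is the identity $[D_l, V_F]=0$, which promotes the purely algebraic Jacobi computation into the desired \emph{horizontal} constancy of the structure functions $\lambda_{i,j}^h$; everything else is bookkeeping about vector fields tangent to the submanifold $\mathcal{K}$.
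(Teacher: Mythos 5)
Your proof is correct and is essentially the paper's argument in invariant form: the identity $[D_l,V_F]=0$ combined with the Jacobi identity and the Leibniz rule is exactly what the paper computes in coordinates when it differentiates the component relation $D_x^r(F_{i,j})=\sum_h\lambda^h_{i,j}D_x^r(F_h)$ and cancels, and your appeal to pointwise linear independence of the $V_{F_h}$ on $\mathcal{K}$ plays the role of the paper's non-singular matrix $A$. The only (minor) added value of your formulation is that it handles general $M$, $N$ and the restriction to $\mathcal{K}$ directly, whereas the paper writes out only the case $M=N=\mathbb{R}$, $\mathcal{K}=\jinf(M,N)$ and asserts the generalization.
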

\begin{proof}
The proof is given  for the case $N=M=\mathbb{R}$ and $\mathcal{H}=\jinf(M,N)$;  the general case is a simple generalization of this one.\\
Since $\mathcal{S}$ is $s$-dimensional, for any point $p \in \jinf(M,N)$ there exist a neighborhood $U$ of $p$ and an integer
$h \in \mathbb{N}_0$ such that the matrix $A=(D_x^{h+j-1}(F_i))|_{i,j=1,...,s}$ is non-singular. Moreover, since
the commutator of two evolution vector fields is an evolution vector field, there exist some $F_{i,j} \in \mathcal{F}$ such that $[V_{F_i},V_{F_j}]=V_{F_{i,j}}$ and, by the definition of evolution vector field,  we have
\begin{equation}\label{CN1}
D^r_x(F_{i,j})=\sum_h \lambda^h_{i,j} D^r_x(F_h).
\end{equation}
Deriving with respect to $x$ the previous relations we obtain
\begin{equation}
\label{CN2}D^{r+1}_x(F_{i,j})=\sum_h D_x(\lambda^h_{i,j}) D^r_x(F_h)+\sum_h \lambda^h_{i,j} D^{r+1}_x(F_h)
\end{equation}
and combining \eqref{CN1} and \eqref{CN2} we find
$$\sum_h D_x(\lambda^h_{i,j}) D^r_x(F_h)=0.$$
Since the matrix $A$ is non-singular we get  $D_x(\lambda^h_{i,j})=0$.
\hfill\end{proof}

\medskip

\noindent In Section \ref{Building differential constraints}
 we will consider two particular cases for the functions $F_i \in \mathcal{F}^n$.\\
In the first case $V_{F_i}$ form a finite dimensional module of constant dimension on all $\jinf(M,N)$.
In this case Proposition \ref{proposition_lie_algebra} ensures that $V_{F_i}$  form a Lie algebra, since the only functions $f$ in $\jinf(M,N)$ such that $D_i(f)=0$ are the constants.\\
In the second case we suppose that $V_{F_i}$ form a finite dimensional module on a real analytic finite dimensional submanifold $\mathcal{H}$ of $\jinf(M,N)$.

\section{Characteristic vector fields in $\jinf(M,N)$}\label{Section Characteristic}

In this section we define the notion of generalized characteristic flow for an evolution vector field and we discuss the connection with the usual characteristic flow for scalar first order evolution PDEs. These results will play a central role in the explicit construction of differential constraints in Section \ref{Building differential constraints}.

\subsection{Characteristics of scalar first order evolution PDEs}\label{subsection_scalar}

 It is well known that, if $N=\mathbb{R}$ and $F \in \mathcal{F}\setminus\mathcal{F}_0$, the evolution vector field $V_F$ is not the prolongation  of a vector field on $J^0(M,N)$ and  does not admit flow in $\jinf(M,N)$, which is  why the equation
\begin{equation}\label{pippo}
\partial_t(u)=F(x,u,u_{\sigma})
\end{equation}
may not admit  solutions even for smooth initial data, or may admit infinite solutions for any smooth initial data. For this reason
the problem of finding solutions to  evolution PDEs is usually solved only in specific situations (for example the linear or semilinear cases) where it is possible to use the powerful techniques of analysis.\\
Anyway,  a classical  geometric approach to scalar first order evolution PDEs (see, e.g.,  \cite{Hilbert}) shows that something can be done in order to solve equation \eqref{pippo} even when $V_F$ does not admit flow in $\jinf(M,N)$.
Indeed given a first order scalar autonomous PDE
\begin{equation}\label{equation_first_order}
\partial_t(u)=F(x^j,u,u_{i})
\end{equation}
it is  possible to solve \refeqn{equation_first_order} considering the following  system of ODEs  on $J^1(M,N)$
\begin{eqnarray*}
\frac{dx^i}{da}&=&-\partial_{u_{i}}(F)(x^j,u,u_{k})\\
\frac{du}{da}&=&F(x^j,u,u_{k})-\sum_h u_{h}\partial_{u_{h}}(F)(x^j,u,u_{k})\\
\frac{du_{i}}{da}&=&\partial_{i}(F)(x^j,u,u_{k})+u_{i}\partial_{u}(F)(x^j,u,u_{k}).
\end{eqnarray*}
If $\Phi_a$ is flow of the vector field on $J^1(M,N)$ corresponding to the previous system and we define $\phi_a^i=(\Phi_a^*(x^i))$ and $\eta_a=\Phi_a^*(u)$, the solution $U(x,t)$ to the
PDE \refeqn{equation_first_order} with initial  data $U(x,0)=f(x)$ is given by
$$U(x,t)=\eta_t(\bar{\phi}_t^{-1}(x),f(\bar{\phi}_t^{-1}(x)),\partial_{x^i}(f)(\bar{\phi}_t^{-1}(x)))$$
where $\bar{\phi}_a(x)=\phi_a(x,f(x),\partial_{j}(f)(x))$.\\
Moreover it is possible to uniquely extend the flow $\Phi_a$ to $J^k(M,N)$
as the solution to the following system of ODEs
$$\frac{du_{\sigma}}{da}=D^{\sigma}(F)(x,u,u_{\sigma})-\sum_i u_{\sigma+1_i} \partial_{u_{i}}(F)(x,u,u_{\sigma}).$$
Defining $\psi_{\sigma,a}=\Phi_a^*(u_{\sigma})$ we have
$$\partial^{\sigma}(U)(x,t)=\psi_{\sigma,t}(\bar{\phi}_t^{-1}(x),f(\bar{\phi}_t^{-1}(x)),\partial^{\sigma}(f)(\bar{\phi}_t^{-1}(x))),$$
and the vector field corresponding to the flow $\Phi_a$ on $\jinf(M,N)$ is given by
$$\bar{V}_F:=\partial_a(\Phi_a)|_{a=0}=V_F-\sum_i\partial_{u_{i}}(F)D_i.$$
We call $\Phi_a$ the \emph{characteristic flow} of $F$ and $\bar{V}_{F}$ its \emph{characteristic vector field}.

\subsection{Characteristics in the general setting}
In this section  we propose an extension of the notion
of characteristic vector field and characteristic flow to multidimensional and higher order case. This extension is based on the
geometric analysis of $\jinf(M,N)$ presented in \cite{KrasVin2}.
We start by recalling the definition  of one-parameter group of local diffeomorphisms on $\jinf(M,N)$ which reduces to the classical one   in the finite dimensional setting.

\begin{definition}\label{definition_one-parameter flow}
A map $\Phi_a:U_a \rightarrow \jinf(M,N)$ is a one-parameter group of local diffeomorphisms if
 $\Phi_a$ are smooth maps,
$U_a$ are open sets $\forall a$
(with $U_0=\jinf(M,N)$) and $\forall p \in U_{a+b} \subset U_b \cap \Phi^{-1}_{b}(U_a)$ (with  $ab\ge0$) we have
 $\Phi_a \circ \Phi_b(p)=\Phi_{a+b}(p)$.\\
The one-parameter group $\Phi_a$ of local diffeomorphisms is the flow of the vector field $X$ if
$$\partial_a(\Phi_a^*(f)(p))|_{a=0}=X(f)(p)$$
for any $f \in \mathcal{F}$.
\end{definition}

\begin{definition}\label{definition_charatcteristic}
Given an evolution vector field $V_F$, we say that $V_F$ (or its generator  $F$) \emph{admits characteristics} if there exist suitable  smooth functions $h^1,...,h^n \in \mathcal{F}$ such that the vector field
$$\tilde{V}_F=V_F-\sum_i h^iD_i,$$
admits flow on $\jinf(M,N)$.
\end{definition}
If we restrict to the scalar case ($N=\mathbb{R}$), which is discussed in Subsection \ref{subsection_scalar},  the following Theorem provides a complete characterization of evolution vector fields admitting characteristics.

\begin{theorem}\label{theorem_first_order}
An evolution vector field $V_F$ on $\jinf(M,\mathbb{R})$ with generator $F$ admits characteristic if an only if $F \in \mathcal{F}_1$.
\end{theorem}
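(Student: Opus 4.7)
The plan is to invoke the classical Lie-Bäcklund theorem in the scalar case ($n=1$), which asserts that every symmetry of the Cartan distribution on $\jinf(M,\mathbb{R})$ that admits a flow is the infinite prolongation of a contact vector field on $J^1(M,\mathbb{R})$. Because a contact vector field on $J^1$ is determined by a generating function lying in $\mathcal{F}_1$, such a result will directly pin down the order of $F$.

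For the \emph{if} direction, given $F \in \mathcal{F}_1$ one takes the canonical choice $h^i = \partial_{u_i}(F)$ suggested by Subsection~\ref{subsection_scalar}. A direct calculation then gives $\tilde{V}_F(x^i) = -\partial_{u_i}(F)$, $\tilde{V}_F(u) = F - \sum_i u_i \partial_{u_i}(F)$, and $\tilde{V}_F(u_j) = \partial_{x^j}(F) + u_j \partial_u(F)$, all lying in $\mathcal{F}_1$. Hence the restriction of $\tilde{V}_F$ to $J^1(M,\mathbb{R})$ coincides with the classical contact vector field with generating function $F$, which is a smooth vector field on the finite-dimensional manifold $J^1$ and therefore admits a local flow by standard ODE theory. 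Since $\tilde{V}_F$ is a symmetry of $\mathcal{C}$, this finite-dimensional flow lifts uniquely by prolongation to the required one-parameter group of local diffeomorphisms of $\jinf$, as already detailed in Subsection~\ref{subsection_scalar}.

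For the \emph{only if} direction, suppose $\tilde{V}_F = V_F - \sum_i h^i D_i$ admits a flow $\Phi_a$ on $\jinf$ for some $h^i \in \mathcal{F}$. A preliminary observation is that $\tilde{V}_F$ is a Cartan symmetry: $V_F$ is one by hypothesis, and $[\sum_i h^i D_i, D_j] = -\sum_i D_j(h^i) D_i \in \mathcal{C}$ since $[D_i,D_j]=0$. The flow $\Phi_a$ therefore consists of local diffeomorphisms of $\jinf$ preserving $\mathcal{C}$. Applying the Lie-Bäcklund theorem, each $\Phi_a$ is the prolongation of a contact transformation of $J^1(M,\mathbb{R})$, so differentiating at $a=0$ gives that $\tilde{V}_F$ is the prolongation of a contact vector field on $J^1$ with some generating function $G \in \mathcal{F}_1$. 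Reading off the coefficients of $\partial_u$ and $\partial_{x^i}$ from the explicit contact-vector-field formula then yields $\tilde{V}_F(u) = G - \sum_i u_i \partial_{u_i}(G)$ and $h^i = \partial_{u_i}(G)$, both in $\mathcal{F}_1$, from which $F = \tilde{V}_F(u) + \sum_i h^i u_i = G \in \mathcal{F}_1$.

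The main obstacle is justifying the application of the Lie-Bäcklund theorem starting from the existence of a flow in the sense of Definition~\ref{definition_one-parameter flow}: one must show that such a flow forces $\tilde{V}_F$ to be of finite type, equivalently that $\Phi_a^*$ preserves the filtration $\{\mathcal{F}_k\}$ up to a bounded shift for small $a$. This precludes the unbounded growth of orders obtained by iterating $V_F$ whenever $F \notin \mathcal{F}_1$, but making this growth argument quantitative in the Fréchet setting is the delicate step.
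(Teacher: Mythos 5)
Your proposal follows essentially the same route as the paper: the forward direction is the explicit contact-flow construction of Subsection~\ref{subsection_scalar}, and the converse is reduced to the Lie-B\"acklund theorem, which is precisely the result the paper cites from \cite{KrasVin2}. The ``delicate step'' you flag --- showing that a flow in the sense of Definition~\ref{definition_one-parameter flow} forces $\tilde{V}_F$ to be the prolongation of a contact vector field on $J^1(M,\mathbb{R})$ --- is exactly the content of that citation, so your argument matches what the paper itself outsources.
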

\begin{proof}
The proof that any $F \in \mathcal{F}_1$  admits characteristic flow is given in  Subsection \ref{subsection_scalar}. The proof of the converse can be found in \cite{KrasVin2}.
\hfill\end{proof}

\begin{remark}
Theorem \ref{theorem_first_order} does not hold if, instead of requiring that $\tilde{V}_F$ admits flow on the whole $\jinf(M,\mathbb{R})$,
we  restrict  to a submanifold of $\jinf(M,\mathbb{R})$.
For example if we consider  $M=\mathbb{R}^2$ with coordinates $(x,y)$ and $F=u_{xy}$,  Theorem \ref{theorem_first_order}
ensures  that $F=u_{xy}$ does not admit characteristics on $\jinf(\mathbb{R}^2,\mathbb{R})$ but,  considering the submanifold
$\mathcal{E} \subset \jinf(\mathbb{R}^2,\mathbb{R})$ generated by the equation $u_{yy}=0$,  it is easy to prove that $V_F \in T\mathcal{E}$ and that $V_F$ admits characteristics  on $\mathcal{E}$.
\end{remark}

If we do not restrict to the scalar case  the situation becomes more complex and, to the best of our knowledge,  a complete theory  of characteristics in $\jinf(M,N)$ for  $N  \not = \mathbb{R}$ has not been developed.\\
Indeed in this case we can find
$F \not \in \mathcal{F}_1^n$ such that  $V_F$ admits characteristics.
For example if we consider $M=\mathbb{R}$ and $ N=\mathbb{R}^2$ (with coordinates $x$ and $(u,v)$ respectively) and $F=(v_{xx},0) \in  \mathcal{F}_2$, the flow of the vector field $V_F$ is given by the following transformation
\begin{eqnarray*}
x_a&=&x\\
u_a&=&u+av_{xx}\\
u_{x,a}&=&u_x+av_{3}\\
u_{xx,a}&=&u_{xx}+av_4\\
&...&\\
v_a&=&v\\
v_{x,a}&=&v_x\\
&...&
\end{eqnarray*}
In this paper, in order to deal with the general case, we propose a stronger definition of characteristics that, although imitating  in some respects the scalar case,  is weak enough to include many cases of interest. \\
Given an open subset $U\subset \jinf(M,N)$ we denote by
$$\mathcal{F}|_{U}=\bigcup_k \mathcal{F}_k|_{U}$$
the set of smooth functions defined on  $U$, that is the union of the sets of smooth functions defined on $\pi_k(U) \subset J^k(M,N)$.\\
Given a subalgebra $\mathcal{G}_0 \subset \mathcal{F}|_U$, we denote by $\mathcal{G}_k$ the algebra generated by smooth  composition of functions of the form $D^{\sigma}(f)$, where $f \in \mathcal{G}_0$ and $\sigma$ is a multi-index with $|\sigma|\leq k$.

\begin{definition}\label{definition_algebra_generate}
A subalgebra $\mathcal{G}_0 \subset \mathcal{F}|_U$  generates $\mathcal{F}|_U$ if $x^i \in \mathcal{G}_0$ and
$$\mathcal{F}|_U=\bigcup_k \mathcal{G}_k.$$
\end{definition}

\begin{definition}\label{definition_strong}
An evolution vector field $V_F$ with generator $F$ admits strong characteristics  if there exists an open set $U \subset \jinf(M,N)$, a finitely generated subalgebra $\mathcal{G}_0$ of $\mathcal{F}|_{U}$ generating $\mathcal{F}|_U$  and  $g^1,...,g^n \in \mathcal{F}$ such that
the vector field $\bar{V}_F=V_F-\sum_i g^i D_{i}$ satisfies
$$\bar{V}_F(\mathcal{G}_0)\subset \mathcal{G}_0.$$
\end{definition}
In the scalar case an evolution vector field admits characteristics if and only if  admits strong characteristics:  indeed in this case $\bar{V}_{F}(\mathcal{F}_1) \subset \mathcal{F}_1$. Moreover, if we consider the evolution vector field $V_F$ of the previous example (with generator $F=(v_{xx},0)$), it is easy to check that $V_F$ admits strong characteristics.
 In fact the subalgebra  $\mathcal{G}_0$ generated by $x,u,v,v_x,v_{xx}$ is such that $V_F(\mathcal{G}_0) \subset \mathcal{G}_0$. Actually we do not know any example of evolution vector field admitting characteristics which are not strong characteristics.

\begin{remark}
In Definition \ref{definition_strong} we consider  a general subalgebra $\mathcal{G}_0$ generating $\mathcal{F}$ instead of restricting to the case  $\mathcal{G}_0=\mathcal{F}_k$ for some $k \in \mathbb{N}$. This is a crucial point because, in the vector case $N=\mathbb{R}^n$ (with $n>1$),  condition $\bar{V}_F(\mathcal{F}_k) \subset \mathcal{F}_k $ implies   $\bar{V}_F(\mathcal{F}_0) \subset \mathcal{F}_0$ (see \cite{KrasVin2}) and the vector fields $\bar{V}_F$ satisfying $\bar{V}_F(\mathcal{F}_k) \subset \mathcal{F}_k $ turn out to  be tangent to the prolongations of  infinitesimal transformations in $J^0(M,N)$.\\
A well-known consequence of this fact is that, in the vector case, the only infinitesimal symmetries of a PDE which can be defined using finite jet spaces $J^k(M,N)$ are Lie-point symmetries. On the other hand,  if we allow $\mathcal{G}_0$ to be a general subalgebra generating $\mathcal{F}$, we obtain a larger and non-trivial class of evolution vector fields admitting strong characteristics.
\end{remark}

\begin{theorem}\label{theorem_strong_characteristic}
With the notations of Definition \ref{definition_strong}, if an evolution vector field admits strong characteristics then it admits characteristics, and $\bar{V}_F$ is its characteristic vector field.
\end{theorem}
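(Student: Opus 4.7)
The plan is to exploit the finite generation of $\mathcal{G}_0$: the strong characteristic condition should make $\bar{V}_F$ restrict to a genuine vector field on a finite dimensional manifold on which I can apply standard ODE theory, and then propagate the resulting flow to all of $\jinf(M,N)|_U$ by climbing the chain $\mathcal{G}_0 \subset \mathcal{G}_1 \subset \mathcal{G}_2 \subset \cdots$.

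First I would observe that the coefficients $g^j$ are forced to lie in $\mathcal{G}_0$. Indeed, $V_F$ is vertical so $V_F(x^i)=0$, hence $\bar{V}_F(x^i)=-g^i$; since $x^i \in \mathcal{G}_0$ by Definition \ref{definition_algebra_generate} and $\bar{V}_F$ preserves $\mathcal{G}_0$ by hypothesis, one gets $g^i \in \mathcal{G}_0$. Next I would prove by induction on $k$ that $\bar{V}_F(\mathcal{G}_k) \subset \mathcal{G}_k$. The base case is the hypothesis. For the inductive step I would use that evolution vector fields commute with total derivatives, $[V_F,D_i]=0$, together with $[D_j,D_i]=0$, to obtain
$$[\bar{V}_F, D_i] = \sum_j D_i(g^j)\, D_j.$$
Every generator of $\mathcal{G}_{k+1}$ is of the form $D_i(h)$ with $h \in \mathcal{G}_k$, and
$$\bar{V}_F(D_i(h)) = D_i(\bar{V}_F(h)) + \sum_j D_i(g^j)\, D_j(h)$$
lies in $\mathcal{G}_{k+1}$ because $\bar{V}_F(h) \in \mathcal{G}_k$ (induction hypothesis), $g^j \in \mathcal{G}_0$, and $D_j(h) \in \mathcal{G}_{k+1}$.

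Each $\mathcal{G}_k$ is finitely generated, being built from finitely many derivatives of bounded order of the finite set of generators of $\mathcal{G}_0$, so it corresponds, via the identification recalled in Section \ref{prel}, to a finite dimensional manifold $M_{\mathcal{G}_k}$. The invariance $\bar{V}_F(\mathcal{G}_k) \subset \mathcal{G}_k$ shows that $\bar{V}_F$ descends to a smooth vector field on $M_{\mathcal{G}_k}$, and therefore admits a local one-parameter group of diffeomorphisms $\Phi^{(k)}_a$. Since all of these arise by restriction of the same derivation to nested subalgebras, the $\Phi^{(k)}_a$ are compatible with the canonical projections $M_{\mathcal{G}_{k+1}} \to M_{\mathcal{G}_k}$; together with $\mathcal{F}|_U = \bigcup_k \mathcal{G}_k$, this allows me to assemble them into a single one-parameter group of local diffeomorphisms $\Phi_a$ on an open subset of $\jinf(M,N)$ containing $U$, in the sense of Definition \ref{definition_one-parameter flow}, whose infinitesimal generator is $\bar{V}_F$.

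The main obstacle I expect is a careful control of the domains of definition. For each $k$ the flow $\Phi^{(k)}_a$ is defined only on an open set $U_a^{(k)} \subset M_{\mathcal{G}_k}$ depending on $k$, and one has to show that the preimages under the projections $\jinf(M,N) \to M_{\mathcal{G}_k}$ form a compatible system of open subsets on which the group law $\Phi_a \circ \Phi_b = \Phi_{a+b}$ holds. Compatibility with the projections, combined with the standard uniqueness result for flows of ODEs on each finite dimensional $M_{\mathcal{G}_k}$, should reduce this to a purely topological bookkeeping of open sets and shrinking neighbourhoods.
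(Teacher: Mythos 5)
Your induction that $\bar{V}_F(\mathcal{G}_k)\subset\mathcal{G}_k$ is exactly the paper's argument (your preliminary observation that $g^i=-\bar{V}_F(x^i)\in\mathcal{G}_0$, so that the coefficients in $[\bar{V}_F,D_i]=\sum_j D_i(g^j)D_j$ lie in $\mathcal{G}_1$, is a slightly cleaner way of getting the paper's claim that $h^j_i\in\mathcal{G}_1$), and the passage to flows on the finite dimensional manifolds $M_{\mathcal{G}_k}$ compatible with the projections is also the same. The gap is in the last step, which you flag as "the main obstacle" and then dismiss as topological bookkeeping with "shrinking neighbourhoods". That is precisely what cannot work here: an open set of $\jinf(M,N)$ must contain a cylinder $\pi_k^{-1}(V)$ for some finite $k$, i.e.\ all but finitely many of the coordinates $u^i_{\sigma}$ must range over all of $\mathbb{R}$. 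If the domains $U^{(k)}_a$ kept imposing genuine restrictions on the new coordinates appearing at each level $k$, their inverse limit would contain no cylinder set, hence would not be open, and $\Phi_a$ would fail to be a one-parameter group of local diffeomorphisms in the sense of Definition \ref{definition_one-parameter flow}. Uniqueness of ODE flows on each $M_{\mathcal{G}_k}$ gives you compatibility where the flows exist, but says nothing about whether the common domain stabilizes.

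The paper closes this gap with a quantitative fact about the flow, not with topology: by Theorem \ref{theorem_flow} and Corollary \ref{corollary_flow1} (see the Appendix, together with Remark \ref{remark_flow1}), $\Phi^*_a(u^i_{\sigma})$ depends \emph{polynomially} on the coordinates $u^j_{\sigma'}$ once $|\sigma'|$ is sufficiently large. Since the system of ODEs defining the flow is triangular with respect to the filtration $\mathcal{G}_k$, the equations for the new coordinates at high levels are affine/polynomial in those coordinates and therefore do not blow up: no new restriction on the domain is introduced beyond some finite level $k_0$, and the domain of $\Phi_a$ is of the stable form $U'=\pi_{\infty,k_0}^{-1}(U_{k_0})$, which is open. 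Your proof needs this (or an equivalent growth estimate on the lifted system) to be complete; without it the assembled "flow" may be defined only on a non-open subset of $\jinf(M,N)$, and the theorem's conclusion does not follow.
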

\begin{proof}
The vector field $\bar{V}_F$ admits flow on the space of functions $\mathcal{G}_0$ since $\mathcal{G}_0$ is finite dimensional. In order to show that $\bar{V}_F$ admits flow on all $\mathcal{F}|_U$ and so (since $U$ depends  on a generic point) on $\mathcal{F}$ we prove by induction that $\bar{V}_F(\mathcal{G}_k) \subset \mathcal{G}_k$. \\
By hypothesis $\bar{V}_F(\mathcal{G}_0)\subset \mathcal{G}_0$. Suppose that $\bar{V}_F(\mathcal{G}_{k-1}) \subset \mathcal{G}_{k-1}$. Since $\bar{V}_F$ is a symmetry of the Cartan distribution,  there exist some functions $h_i^j \in \mathcal{F}$ such that
$$[\bar{V}_F,D_i]=\sum_j h^j_i D_j$$
where $h^i_j \in \mathcal{G}_1$, being $\bar{V}_F(\mathcal{G}_0)\subset \mathcal{G}_0$ and $x^i \in \mathcal{G}_0$. \\
We recall that $\mathcal{G}_k$ is generated by functions of the form $D_i(g)$ with $g \in \mathcal{G}_{k-1}$. So
$$\bar{V}_F(D_i(g))=D_i(\bar{V}_F(g))+\sum_j h^j_i D_j(g) \in \mathcal{G}_{k}$$
since $\bar{V}_F(g) \in \mathcal{G}_{k-1}$ and $h^i_j \in \mathcal{G}_1$.
Hence $\bar{V}_F$ admits flow on $\mathcal{G}_k$ and the flow on $\mathcal{G}_k$ is compatible with the flow on $\mathcal{G}_{k-1}$ being $\mathcal{G}_{k-1} \subset \mathcal{G}_k$.\\
The problem of the previous construction is that in general the domain $U_k$ of the flow in $\mathcal{G}_k$ depends on  $k$. This means that, if we denote with $P_{h,k}$ the natural projection of $\mathcal{G}_h$ on $\mathcal{G}_k$ with $h>k$, it might happen that $P_{h,k}^{-1}(U_k) \not = U_h$. But this is not actually the case. Indeed since $\mathcal{G}_0$ generates $\mathcal{F}|_U$, then $\mathcal{F}_0|_U \subset \mathcal{G}_0$ and so $\mathcal{F}_k|_U \subset \mathcal{G}_k$. In particular $u^i_{\sigma} \in \mathcal{G}_k$ if $|\sigma|\leq k$. But by Remark \ref{remark_flow1} and Corollary \ref{corollary_flow1} (see Appendix) $\Phi_a(u^i_{\sigma})$ is polynomial in $u^j_{\sigma'}$ for $|\sigma'|$ sufficiently large. This means that $u^j_{\sigma'}$ can vary in all $\mathbb{R}$ and so the domain of definition of $\Phi_a$ in $U \subset \jinf(M,N)$ is not empty and  is of the form $U'=\pi_{\infty, k}^{-1}(U_k)$ for $k$ sufficiently large. Since $U'$ is an open subset of $\jinf(M,N)$  this concludes the proof.
\hfill\end{proof}

\begin{definition}
Let $y^1,...,y^l,... \in \mathcal{F}|_U$  be a sequence of functions defined in an open set  $U$. We say that $Y=\{y^i\}|_{i \in \mathbb{N}}$
is a local adapted coordinate system with respect to a subalgebra $\mathcal{G}_0$  generating $\mathcal{F}|_U$, if there exists a sequence $k_1,...,k_l,... \in \mathbb{N}$, with $k_i < k_{i+1}$, such that $y^1,...,y^{k_i}$ is a coordinate system for $\mathcal{G}_i$.
\end{definition}

\begin{remark}
The flow of a vector field with strong characteristics solves a triangular infinite dimensional system of ODEs. Indeed if we consider an  adapted coordinate system with respect to a subalgebra $\mathcal{G}_0$
we have $\bar{V}_F(y^i)=f(y^1,..,y^{k_1})$ for $i=1,...,k_1$, $\bar{V}_F(y^i)=f(y^1,..,y^{k_2})$ for $i=k_1+1,...,k_2$  and so on. So we can start by solving the system for $i=1,...,k_1$ and then solve the system for $i=k_1+1,...,k_2$, since the system  is of triangular type.
\end{remark}

The main trouble when working with a family of evolution vector fields admitting characteristic flows is that the sum or the Lie brackets of two of them usually do not admit characteristic flow. In order to overcome this problem we give the following Definition.

\begin{definition}\label{definition_common_filtration}
A set of evolution vector fields $V_{F_1},...,V_{F_s}$ with  strong characteristics admits a \emph{common filtration} if  $\forall p \in \jinf(M,N)$ there exist a neighborhood $U$ of $p$ and a subalgebra $\mathcal{G}_0 \subset \mathcal{F}|_U$ such that $\mathcal{G}_0$ is the subalgebra required  in Definition \ref{definition_strong} for $\bar{V}_{F_1},...,\bar{V}_{F_s}$.
\end{definition}
If $F_1,...,F_s$, correspond to evolution vector fields with  strong characteristics admitting a common filtration, then also $c F_i + d F_j$ (where $c,d \in \mathbb{R}$) and $[F_i,F_j]$ correspond to  vector fields with  strong characteristics. Furthermore $c F_i+d F_j$ and $[F_i,F_j]$ admit the same common filtration of $F_1,...,F_k$.

\section{Building differential constraints}\label{Building differential constraints}
In this section we consider a system of  PDEs of the form \eqref{equation_evolution} such that  some of the evolution vector fields  $V_{F_i}$ admit strong characteristics and a common filtration. In this setting we show how it is possible to construct a differential constraint for the system \eqref{equation_evolution} starting from the knowledge of a suitable submanifold of $\jinf(M,N)$.
The construction, which  is completely explicit,  take the cue from  of the moving frame method (see \cite{Olver4}).

\begin{definition}\label{definition transersality}
Let  $\mathcal{H}\subset \jinf(M,N) $ be a submanifold and $U$ be an open neighborhood of $p \in \mathcal{H}$. Given a sequence of independent functions $f^i \in \mathcal{F}|_U$ ($i \in \mathbb{N}$) such that $\mathcal{H} \cap U$ is the annihilator of $f^i$, we say that a distributions  $\Delta= \spann\{V_{G_1}, \ldots, V_{G_h}\}$ is transversal to $\mathcal{H}$ in $U$ if there exist $r_1, \ldots , r_h$ such that
the matrix $(\bar{V}_{G_i}(f^{r_j}))|_{i, j=1,...,h}$ has maximal rank in $U$.
In the following  the sequence $f^i$ will be chosen so  that $r_j=j$ and $f^i$ is a local coordinate system adapted with respect to the filtration $\mathcal{G}_k$ for $k$ sufficiently large.
\end{definition}

\begin{lemma}\label{lemma_main}
Let $G_1,...,G_h$ be a subalgebra of $\mathcal{F}^n$  admitting strong characteristics and a common filtration. Let $\Phi^i_{a^i}$ be the characteristic flow of $G_i$ and $\mathcal{H}$ be a canonical finite dimensional submanifold of $\jinf(M,N)$ such that the distribution $T\mathcal{H} \oplus \spann\{V_{G_1},...,V_{G_h}\}$ has constant rank and the distribution
$\Delta= \spann\{V_{G_1}, \ldots, V_{G_h}\}$ is transversal to $\mathcal{H}$.   Then there exists a suitable neighborhood of the origin $\mathcal{V} \subset \mathbb{R}^{h}$  such that
\begin{equation}\label{manifold kappa}
\mathcal{K}=\bigcup_{(a^1,...,a^{h}) \in \mathcal{V}} \Phi^h_{a^h}(...(\Phi^{1}_{a^{1}}(\mathcal{H}))...)
\end{equation}
is  a finite dimensional submanifold of $\jinf(M,N)$.
\end{lemma}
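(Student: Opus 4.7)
The plan is to realise $\mathcal{K}$ as the image of a single smooth map from a finite-dimensional parameter space and then apply the classical immersion theorem at the level of a suitable finite jet bundle. Concretely, I would define
$$\Psi(a^1,\dots,a^h,p)=\Phi^h_{a^h}\circ\cdots\circ\Phi^1_{a^1}(p).$$
First I would check that $\Psi$ is well-defined on an open neighbourhood of $\{0\}\times\mathcal{H}$. Each characteristic flow $\Phi^i_{a^i}$ exists on an open subset of $\jinf(M,N)$ that contains $\mathcal{H}$ when $a^i=0$, and the hypothesis of common filtration (Definition \ref{definition_common_filtration}) lets me shrink all these domains to a single open subset $U\supset\mathcal{H}$. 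Continuity of each flow in $a^i$ then yields a neighbourhood $\mathcal{V}\subset\mathbb{R}^h$ of the origin on which the composition is defined; by construction $\Psi(\mathcal{V}\times\mathcal{H})=\mathcal{K}$.

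Next I would differentiate $\Psi$ at a point $(0,p)$ with $p\in\mathcal{H}$. Since $\Phi^i_0=\mathrm{id}$ and the characteristic flow has infinitesimal generator $\bar{V}_{G_i}$ (Definition \ref{definition_charatcteristic} and Theorem \ref{theorem_strong_characteristic}), one reads off
$$d\Psi_{(0,p)}(\partial_{a^i})=\bar{V}_{G_i}(p),\qquad d\Psi_{(0,p)}|_{T_p\mathcal{H}}=\mathrm{id}_{T_p\mathcal{H}}.$$
Because $\mathcal{H}$ is canonical, $D_j\in T\mathcal{H}$, and therefore $\bar{V}_{G_i}-V_{G_i}=-\sum_j g^j_iD_j\in T\mathcal{H}$. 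Hence
$$T_p\mathcal{H}+\spann\{\bar{V}_{G_1}(p),\dots,\bar{V}_{G_h}(p)\}=T_p\mathcal{H}+\spann\{V_{G_1}(p),\dots,V_{G_h}(p)\},$$
and the constant-rank hypothesis together with the transversality requirement (Definition \ref{definition transersality}) force the vectors $\bar{V}_{G_i}(p)$ to be linearly independent modulo $T_p\mathcal{H}$. Thus $d\Psi_{(0,p)}$ is injective.

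To upgrade this pointwise injectivity to the claim that $\mathcal{K}$ is a finite-dimensional submanifold of $\jinf(M,N)$, I would project to a finite-order jet bundle. The common filtration $\mathcal{G}_0\subset\mathcal{G}_1\subset\cdots$ makes every $\bar{V}_{G_i}$ act triangularly on the finite-dimensional algebras $\mathcal{G}_r$, and $\mathcal{H}$, being finite-dimensional, is captured by some $\pi_K(\mathcal{H})\subset J^K(M,N)$. For $k\ge K$ the composition $\pi_k\circ\Psi$ therefore factors through a map into a finite-dimensional submanifold of $J^k(M,N)$, and the differential computation above still applies. The classical constant-rank theorem then gives, after possibly shrinking $\mathcal{V}$ and a neighbourhood of $p$ in $\mathcal{H}$, a $(\dim\mathcal{H}+h)$-dimensional submanifold of $J^k(M,N)$; by the definition of submanifold of $\jinf(M,N)$ recalled in Section~\ref{prel}, this produces $\mathcal{K}$ as a finite-dimensional submanifold of dimension $\dim\mathcal{H}+h$.

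The main obstacle I expect is the last step: namely, verifying that the dimension of $\pi_k(\mathcal{K})$ stabilises and equals $\dim\mathcal{H}+h$ uniformly in $k$, and that the local neighbourhood $\mathcal{V}$ on which the map remains an immersion can be chosen independently of $k$. Both points hinge on the triangular action of each $\bar{V}_{G_i}$ on the filtration $\mathcal{G}_r$ (analogous to Remark following Theorem \ref{theorem_strong_characteristic}), and on the fact that $\mathcal{H}$ being canonical propagates automatically through the Cartan-preserving flows $\Phi^i_{a^i}$. Once this uniformity is secured the submanifold property of $\mathcal{K}$ is immediate.
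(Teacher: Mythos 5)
Your argument is essentially correct, but it proves the lemma by a genuinely different route than the paper. You give a \emph{parametric} description: $\mathcal{K}$ is the image of the single map $\Psi(\alpha,p)=\Phi^h_{a^h}\circ\cdots\circ\Phi^1_{a^1}(p)$, whose differential at $(0,p)$ you show to be injective because transversality makes the $\bar V_{G_i}$ independent modulo $T_p\mathcal{H}$ (and $\bar V_{G_i}-V_{G_i}\in\mathcal{C}\subset T\mathcal{H}$). The paper instead gives an \emph{implicit} description in the style of the moving-frame method: it solves $\Phi^{1*}_{a^1}(\cdots\Phi^{h*}_{a^h}(f^i)\cdots)=0$, $i=1,\ldots,h$, for $\alpha=A(p)$ via the transversality/implicit-function argument, and exhibits $\mathcal{K}$ as the common zero set of the normalized invariants $K^j(p)$, $j>h$. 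Both are legitimate; what the paper's construction buys, and yours does not, is the explicit family of defining functions $K^j$, which is exactly what is used afterwards (in Lemma \ref{lemma_main2} to show $\bar V_{G_i}(K^j)=0$, in Theorem \ref{theorem_main2}, and in the algorithm of Subsection \ref{subsection_general_algorithm}); so if you take your route you would need to supply that implicit description separately before the rest of the section goes through. Concerning the one point you flag as a residual obstacle: the uniformity in $k$ is in fact automatic and needs no appeal to triangularity. If $\pi_K\circ\Psi$ is an injective immersion (indeed an embedding after shrinking to a compact parameter neighbourhood) at some level $K$, then for every $k\ge K$ the map $\pi_k\circ\Psi$ is injective because $\pi_{k,K}\circ(\pi_k\circ\Psi)=\pi_K\circ\Psi$ already is, and its differential is injective because its composition with $d\pi_{k,K}$ already is; hence the same neighbourhood $\mathcal{V}\times W$ works for all $k\ge K$ and all projections $\pi_k(\mathcal{K})$ have the common dimension $\dim\mathcal{H}+h$, which is precisely the paper's definition of a finite-dimensional submanifold of $\jinf(M,N)$. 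The triangular/polynomial behaviour of the flows (Theorem \ref{theorem_flow} and Corollary \ref{corollary_flow1}) is needed elsewhere (to guarantee the flows are globally defined in the high-order variables), not for this stabilization step.
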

\begin{proof}
In the following, for the sake of clarity,  we write
$$\bold{\Phi}_{\alpha}^*(f)=\Phi^{1*}_{a^1}(...\Phi^{h*}_{a^h}(f)...),$$
where $\alpha=(a^1,...,a^h) \in \mathbb{R}^h$.
Given a sequence of independent functions $f^i $ ($i \in \mathbb{N}$) such that $\mathcal{H} $ is the annihilator of $f^i$,
for any point $p \in \mathcal{H}$ there exists a neighborhood $U$ such that the matrix $(\bar{V}_{G_i}(f^j))|_{i,j=1,...,h}$  has maximal rank in $U$. Therefore, considering
the submanifold $\tilde{\mathcal{H}}$ defined as the annihilator of the functions $f^i \in \mathcal{F}|_U$ ($i=1, \ldots, h$),
the equations
\begin{equation}\label{manifold tildeH}
\bold{\Phi}^*_{\alpha}(f^i)=0 \qquad i=1, \ldots h
\end{equation}
can be solved with respect to $\alpha$. This means that, possibly restricting the open set $U$, there exist a smooth function $A(p)=(A^1(p),...,A^h(p))$ defined on $U$ such that $\bold{\Phi}^*_{A(p)}(f^i)(p)=0$ (for $i=1, \ldots h$), i.e. $\Phi^h_{A^h}(...(\Phi^{1}_{A^{1}}(p))...) \in \tilde{\mathcal{H}}$. In the following we prove that $\mathcal{K}$ is the annihilator of the functions
\begin{equation}\label{equation for K}
K^j(p)=\bold{\Phi}^*_{A(p)}(f^{j})(p), \qquad j>h
\end{equation}
 and, since $K^j$ are independent and adapted with respect to the filtration $\mathcal{G}_k$ for $k$ sufficiently large, $\mathcal{K}$ is a submanifold of $\jinf(M,N)$. \\
We start by proving that  if  $p_0 \in \mathcal{K} \cap U$, then $K^j(p_0)=0$ (for $j>h$).
Indeed,  if $p_0 \in \mathcal{K} \cap U$, the point $p_0$ can be reached starting from $p\in \mathcal{H}$ by means of composition of suitable flows $\Phi^{i}_{a^i}$. On the other hand, for any $p_0 \in \mathcal{K} \cap U$,   there exists  $A(p_0)=(A^1, \ldots, A^h)$  such that
$\Phi^{h}_{A^h}(...\Phi^1_{A^1}(p_0)...) \in \tilde{\mathcal{H}}$. Since $\mathcal{H}\subset \tilde{\mathcal{H}}$ and the transversality condition ensures that equation \eqref{manifold tildeH} have a unique solution, we have $\Phi^{h}_{A^h}(...\Phi^1_{A^1}(p_0)...) \in \mathcal{H}$. Therefore
$$K^j(p_0)=\bold{\Phi}^*_{A(p_0)}(f^j)(p_0)=f^j(\Phi^{h}_{A^h}(...\Phi^1_{A^1}(p_0)...)=0$$
for any $j$ and in particular for $j>h$.
In order to prove the other inclusion
we have to ensure that $p_0$ can be reached starting from a point $p\in \mathcal{H}$ by means of the flows  $\Phi^i_{a^i}$. Given  $p \in \tilde{\mathcal{H}}$ such that $\Phi^h_{A^h}(...(\Phi^{1}_{A^{1}}(p_0))...) =p$, the  definition of $A(p_0)$ ensures that $f^i(p)=0$ for $i=1, \ldots h$ whereas by hypothesis we have
$$K^j(p_0) =\bold{\Phi}^*_{A(p_0)}(f^{j})(p_0)=f^j(p)=0 \qquad  j>h.$$ Hence $f^i(p)=0$ $\forall i \in \mathbb{N}$ and  $p\in \mathcal{H}$.
\hfill\end{proof}

\begin{lemma}\label{lemma_main2}
In the hypotheses and with the notations of Lemma \ref{lemma_main}, $\bar V_{G_j}\in T\mathcal{K}$ and $D_i \in T\mathcal{K}$.
\end{lemma}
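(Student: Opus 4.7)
The plan is to prove the two inclusions separately, using the explicit parameterization of $\mathcal{K}$ from Lemma \ref{lemma_main}. I begin with $D_i \in T\mathcal{K}$. Writing $\bar V_{G_j} = V_{G_j} - \sum_k h^k_j D_k$ and using that every evolution vector field commutes with the total derivatives $D_k$, a short commutator computation gives $[\bar V_{G_j}, D_i] = -\sum_k D_i(h^k_j)\,D_k \in \mathcal{C}$. Hence each $\bar V_{G_j}$ is a symmetry of the Cartan distribution and its flow $\Phi^j_{a^j}$ preserves $\mathcal{C}$; since $\mathcal{H}$ is canonical, iterating the composition in \eqref{manifold kappa} propagates $\mathcal{C}\subset T\mathcal{H}$ to $\mathcal{C}\subset T\mathcal{K}$.

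Next I would use the parameterization $F(\alpha,q)=\Phi^h_{\alpha^h}(\cdots\Phi^1_{\alpha^1}(q)\cdots)$ of $\mathcal{K}$ to establish $\bar V_{G_j}\in T\mathcal{K}$. Every $\partial_{\alpha^l}F\big|_p$ is a tangent vector to $\mathcal{K}$ at $p=F(\alpha,q)$. The top derivative gives $\partial_{\alpha^h}F\big|_p=\bar V_{G_h}\big|_p$ at once. For $l<h$ one has $\partial_{\alpha^l}F\big|_p=(\Phi^h_{\alpha^h}\circ\cdots\circ\Phi^{l+1}_{\alpha^{l+1}})_*\bar V_{G_l}$ evaluated at the intermediate point, and the crucial step is to observe that the module $\mathcal{D}=\spann\{\bar V_{G_1},\ldots,\bar V_{G_h}\}+\mathcal{C}$ is invariant under each $\Phi^i_{a^i}$. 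This rests on the subalgebra assumption on the $G_i$'s (so that $[V_{G_i},V_{G_j}]\in\spann\{V_{G_k}\}$) combined with the observation above that the Cartan correction $\sum_k h^k_i D_k$ contributes only $\mathcal{C}$-terms to the brackets of characteristic vector fields. It then follows that $\partial_{\alpha^l}F\big|_p=\sum_k A^k_l(\alpha)\,\bar V_{G_k}\big|_p+C_l(\alpha)$ with $C_l\in\mathcal{C}|_p$ and $A(0)=I$. After shrinking $\mathcal{V}$ so that $A(\alpha)$ stays invertible, one inverts this linear system and expresses each $\bar V_{G_k}\big|_p$ as a combination of the $\partial_{\alpha^l}F\big|_p\in T\mathcal{K}$ and of elements of $\mathcal{C}|_p\subset T\mathcal{K}$, concluding that $\bar V_{G_k}\in T\mathcal{K}$.

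The main obstacle I expect is making rigorous the invariance of the module $\mathcal{D}$ under the characteristic flows, since the classical Frobenius theorem is not directly applicable in the infinite-dimensional jet setting. The argument should rely on the algebraic closure properties stated above (the subalgebra structure of $\{G_i\}$ and the $\mathcal{C}$-symmetry of the $\bar V_{G_i}$'s) together with the common filtration hypothesis which, as in the proof of Theorem \ref{theorem_strong_characteristic}, confines the computations to a finitely generated subalgebra where the flows are determined by a triangular ODE system and the Ad-type formula for the pushforwards can be verified directly.
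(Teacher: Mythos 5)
Your treatment of $D_i \in T\mathcal{K}$ coincides in substance with the paper's: both rest on the fact that the characteristic flows map the Cartan distribution onto itself (the paper makes this quantitative in Theorem \ref{theorem_flow} of the Appendix, where $\Phi^*_{\alpha}(D_i)=\sum_j C^j_i D_j$ with $C$ invertible), combined with $D_i\in T\mathcal{H}$ and the fact that the parameterizing maps send $T\mathcal{H}$ into $T\mathcal{K}$. For $\bar V_{G_j}\in T\mathcal{K}$, however, you take a genuinely different route. The paper works with the \emph{defining functions} $K^j$ of $\mathcal{K}$ from \eqref{equation for K}: it chooses local coordinates $(f^1,\ldots,f^h,y^1,\ldots,y^r)$ in which the $y^l$ are invariants of the $\bar V_{G_i}$, observes that $K^j(p)=L^j(0,\ldots,0,y^1(p),\ldots,y^r(p))$ depends only on these invariants, and concludes $\bar V_{G_i}(K^j)=0$ directly --- no ODE or Ad-type computation is needed, the transversality and solvability of \eqref{manifold tildeH} established in Lemma \ref{lemma_main} doing all the work. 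You instead work with the \emph{parameterization} \eqref{manifold kappa}, pushing $\bar V_{G_l}$ forward along the remaining flows and inverting the resulting linear system; the invariance of $\spann\{\bar V_{G_1},\ldots,\bar V_{G_h}\}+\mathcal{C}$ under the flows, which you correctly identify as the crux, is exactly the kind of statement the paper proves later (for a larger distribution) inside Theorem \ref{theorem_main1}, via the linear ODE $\partial_a(e^l)=-\bar V_{G_i}(e^l)$ with vanishing initial data, and your closure computation $[\bar V_{G_i},\bar V_{G_j}]\in\spann\{\bar V_{G_k}\}+\mathcal{C}$ together with the common-filtration hypothesis does make that step rigorous level by level. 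So your argument is sound, but it imports the machinery of Theorem \ref{theorem_main1} into a lemma the paper settles by a purely algebraic invariance observation; the paper's route is shorter and sidesteps the infinite-dimensional Ad formula, which is precisely the obstacle you flag, while yours has the mild advantage of not needing to manipulate the defining equations of $\mathcal{K}$ at all.
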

\begin{proof}
We recall that a vector field $V \in T\mathcal{K}$ if and only if $V(K^j)=0$, where  $K^j$ are given by \eqref{equation for K}.
Since for any $j$ (with $j>h$) there exists a suitable $k$ such that $f^1, \ldots f^j \in \mathcal{G}_k$,  it is possible to chose as coordinates in $\mathcal{K}\cap U\cap \mathcal{G}_k$ the functions $f^i$ ($i=1, \ldots h$) and some  functions $y^1, \ldots, y^r$ (with $r=\dim(\mathcal{G}_k)-h$) such that $\bar{V}_{G_i}(y^l)=0$. In particular, for any $j>h$, there exists a smooth function $L^j$ such that
$$f^j(p)=L^j(f^1(p),...,f^h(p),y^1(p),...y^r(p)).$$
Since $f^1, \ldots, f^h$ vanish on $\tilde{\mathcal{H}}$ we have
$$K^j(p)=L^j(0,...,0,y^1(p),...,y^r(p)),$$
and so $\bar{V}_{G_i}(K^j)=\bar{V}_{G_i}(L^j(0,...,0,y^1(p),..., y^r(p)))=0$. \\
In order to  prove that $D_i \in T\mathcal{K}$, we consider
$$D^{\alpha}_i=\bold{\Phi}^*_{\alpha}(D_i).$$
By definition, being  $D_i \in T\mathcal{H}$, we have that $D^{\alpha}_i \in T\mathcal{K}$ and,  by Theorem \ref{theorem_flow} (see Appendix), there exist smooth functions $C^i_j(\alpha,p)$ such that
$$D^{\alpha}_i=\sum_j C_i^j(\alpha,p) D_j.$$
Moreover, since $\Phi^i_{a^i}$ are diffeomorphisms,  $\spann\{D^{\alpha}_1,...,D^{\alpha}_m\}$ and $\spann\{D_1,...,D_m\}$ have  the same dimension. Hence the matrix $C^i_j$ is invertible for any $\alpha$, ensuring that $D_i \in T\mathcal{K}$.
\hfill\end{proof}

\begin{remark}\label{remark_main}
The functions $K^j$ defined by \eqref{equation for K} are a set of independent invariants for the vector fields $\bar{V}_{G_i}$. Furthermore,
since $\mathcal{K}$ is finite dimensional, it is possible to add a finite number of functions $z^i$ such that $(z^i,K^j)$ form an adapted coordinate system with respect the filtration $\mathcal{G}_k$ for $k$ sufficiently large.
\end{remark}

\begin{theorem}\label{theorem_main1}
In the hypotheses and with the notations of Lemma \ref{lemma_main}, let $V_F$ be an evolution vector field such that  $V_F \in T\mathcal{H}$, $\dim(\spann\{V_F,V_{G_1},...,V_{G_{h}}\})=h+1$ and
$$[G_i,F]=\mu_i F + \sum_k\lambda_i^k G_k \qquad \mu_i,\lambda_{i}^j \in \mathbb{R}$$
Then $V_F \in T\mathcal{K}$.
\end{theorem}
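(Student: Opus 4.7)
The plan is to build an $\mathcal{F}$-module of vector fields $\mathcal{M}$ that contains $V_F$, is preserved by the Lie bracket with each $\bar V_{G_i}$, and whose generators evaluate inside $T_q\mathcal{K}$ at every $q \in \mathcal{H}$. Once such a module is in hand, I will pull $V_F$ back along the composition of characteristic flows $\Psi_\alpha = \Phi^h_{a^h}\circ\cdots\circ\Phi^1_{a^1}$ to a point $q \in \mathcal{H}$, observe that the pullback lies in $\mathcal{M}|_q \subset T_q\mathcal{K}$, and push forward to recover $V_F|_p \in T_p\mathcal{K}$.

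The first computational step is to expand $[\bar V_{G_i}, V_F]$ using the identity $\bar V_{G_i} = V_{G_i} - \sum_j g^j_i D_j$ together with the fact that the $D_j$ commute with every evolution vector field. The hypothesis $[V_{G_i}, V_F] = V_{[G_i, F]} = \mu_i V_F + \sum_k \lambda_i^k V_{G_k}$ should then yield
$$[\bar V_{G_i}, V_F] = \mu_i V_F + \sum_k \lambda_i^k \bar V_{G_k} + \sum_j \rho_i^j D_j,$$
for suitable $\rho_i^j \in \mathcal{F}$. Analogous calculations, exploiting that $\{G_1,\ldots,G_h\}$ is a Lie subalgebra and that each $\bar V_{G_i}$ is a symmetry of the Cartan distribution, give $[\bar V_{G_i}, \bar V_{G_k}] \in \sum_l \mathcal{F}\,\bar V_{G_l} + \mathcal{C}$ and $[\bar V_{G_i}, D_j] \in \mathcal{C}$.

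Setting
$$\mathcal{M} = \mathcal{F}\cdot V_F + \sum_{k=1}^h \mathcal{F}\cdot \bar V_{G_k} + \mathcal{C},$$
these brackets imply $[\bar V_{G_i}, \mathcal{M}] \subset \mathcal{M}$ for every $i$. The ODE $\frac{d}{da^i}\Phi^{i*}_{a^i} Z = [\bar V_{G_i}, \Phi^{i*}_{a^i} Z]$ then promotes this involutivity to $\Phi^{i*}_{a^i}(\mathcal{M}) \subset \mathcal{M}$ along each flow, and iterating over $i=1,\ldots,h$ yields $\Psi_\alpha^* V_F \in \mathcal{M}$. To conclude, fix $p = \Psi_\alpha(q) \in \mathcal{K}$ with $q \in \mathcal{H}$; every generator of $\mathcal{M}|_q$ lies in $T_q\mathcal{K}$, since $V_F|_q \in T_q\mathcal{H} \subset T_q\mathcal{K}$ by hypothesis while $\bar V_{G_k}|_q$ and $D_j|_q$ lie in $T_q\mathcal{K}$ by Lemma \ref{lemma_main2}. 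The same Lemma makes $\mathcal{K}$ invariant under each $\Phi^i_{a^i}$, so $d\Psi_\alpha|_q$ maps $T_q\mathcal{K}$ into $T_p\mathcal{K}$, and $V_F|_p = d\Psi_\alpha|_q\bigl((\Psi_\alpha^* V_F)|_q\bigr) \in T_p\mathcal{K}$.

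I expect the main obstacle to be the Cartan-correction bookkeeping in the bracket calculations and the rigorous integration of the pullback ODE at the level of the infinite-dimensional module $\mathcal{M}$ on $\jinf(M,N)$; once $\mathcal{M}$ is shown stable under $[\bar V_{G_i}, \cdot]$, the remainder of the argument is a direct manipulation using the properties of the characteristic flows.
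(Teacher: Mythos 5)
Your proposal is correct and follows essentially the same route as the paper: you form the module $\mathcal{M}=\mathcal{F}\cdot V_F+\sum_k\mathcal{F}\cdot\bar V_{G_k}+\mathcal{C}$, which is exactly the paper's distribution $\Delta=\spann\{D_1,\dots,D_m,V_{G_1},\dots,V_{G_h},V_F\}$, prove its stability under $[\bar V_{G_i},\cdot\,]$ from the bracket hypothesis, and upgrade this to invariance under the characteristic flows via the flow ODE before evaluating on $\mathcal{H}$, where Lemma \ref{lemma_main2} and $V_F\in T\mathcal{H}$ place everything inside $T\mathcal{K}$. The only cosmetic difference is that the paper pushes the whole distribution forward and shows the stray coefficients $e^l$ along $\partial_{K^l}$ vanish by a linear ODE with zero initial data, whereas you pull back the single field $V_F$ to $\mathcal{H}$ and push forward again; the underlying mechanism is identical.
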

\begin{proof}
Given the  $(m+h+1)$-dimensional distribution
$$\Delta:=\spann\{D_1,...,D_m,V_{G_1},...,V_{G_h},V_F\}, $$
we have  $\Delta |_{\mathcal{H}}\subseteq T\mathcal{H} \oplus \spann \{ V_{G_1}, \ldots, V_{G_h} \}\subseteq T\mathcal{K}|_{\mathcal{H}}$ and,
by hypothesis,
$[\bar{V}_{G_i},\Delta]\subseteq \Delta$.
If we  prove that
$$\Phi^i_{a*}(\Delta)=\Delta,$$
we have $\Delta|_{\mathcal{K}} \subset T\mathcal{K} $ and, in particular,  $V_F \in T\mathcal{K}$.\\
Considering  the coordinate system $z^i,K^j$ of Remark \ref{remark_main}  we can suppose, possibly relabeling some invariant $z^i$  with $K^j$ for some $j$, that we have exactly $h$ coordinates $z^i$.
Eliminating some element of the form $\partial_{K^j}$, the sequence $V_F,\bar{V}_{G_j},D_k,\partial_{K^l}$ form a basis of $T\jinf(M,N)$ and  for any vector field $X \in T\jinf(M,N)$ there exist suitable functions $b,c^i,d^j,e^l$ depending on $a$ and $p \in U_a$ such that
$$X_a:=\Phi^i_{a*}(X)=b(a,p)V_F+\sum_{j,k,l} c^j(a,p)\bar{V}_{G_j}+d^k(a,p)D_k+e^l(a,p)\partial_{K^l}.$$
From the definition of $X_a$ and using $[\bar{V}_{G_i},\Delta] \subset \Delta$ and $[\bar{V}_{G_i},\partial_{K^l}]\in \Delta$,
we obtain that the functions $e^l$ must solve the equations
$$\partial_a(e^l)=-\bar{V}_{G_i}(e^l).$$
Moreover, since $X_0=X\in \Delta$, $e^l(0,p)=0$ and, from the previous equation, we get $e^l(a,p)=0$ for any $a$, which ensures $X_a \in \Delta$ for any $a$.
\hfill\end{proof}

\begin{remark}
Theorem \ref{theorem_main1} still holds if we consider $r$ functions $F_i \in \mathcal{F}$  such that
$\dim(\spann\{V_{F_1},...,V_{F_r},V_{G_1},...,V_{G_{h}}\})=r+h$, $V_{F_i} \in T\mathcal{H}$ for any $i=1,...,r$ and
$$[G_i,F_j]=\sum_{k,l} (\mu^k_{i,j} F_k+\lambda^l_{i,j}G_l)$$
for some constants  $\lambda^l_{i,j},\mu^k_{i,j} \in \mathbb{R}$.
\end{remark}

\begin{theorem}\label{theorem_main2}
In the hypotheses and with the notations of Lemma \ref{lemma_main}, if $F,G_i$ are real analytic,  $\mathcal{H}$ is defined by real analytic
functions and, denoting by $L=\langle  F, G_1, \ldots, G_h \rangle$ the Lie algebra generated by $F$ and $G_i$, we have
$$L|_{\mathcal{H}}  \subset T\mathcal{H} \oplus \spann\{V_{G_1},...,V_{G_h}\},$$
then $V_F \in T\mathcal{K}$.
\end{theorem}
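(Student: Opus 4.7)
The strategy is to reduce the tangency $V_F\in T\mathcal{K}$ to a convergent multivariate Lie series in the characteristic flows, and then verify tangency termwise via the hypothesis. First, by Lemma \ref{lemma_main2} each $\bar V_{G_i}\in T\mathcal{K}$, so the flow $\Phi^i_{a^i}$ preserves $\mathcal{K}$ and its tangent bundle. Writing $\bold{\Phi}_\alpha := \Phi^h_{a^h}\circ\cdots\circ\Phi^1_{a^1}$, every $p\in\mathcal{K}$ can be represented as $p=\bold{\Phi}_\alpha(q)$ with $q\in\mathcal{H}$; since $\bold{\Phi}_{\alpha *}T_q\mathcal{K}=T_p\mathcal{K}$, the claim $V_F(p)\in T_p\mathcal{K}$ is equivalent to
$$\bigl[(\Phi^1_{-a^1})_*\cdots(\Phi^h_{-a^h})_*(V_F)\bigr](q)\in T_q\mathcal{K}\quad\text{for every }q\in\mathcal{H}\text{ and small }\alpha.$$

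Next, by real analyticity of $F$, the $G_i$, and the corresponding characteristic flows, the left-hand side expands as the convergent Lie series
$$\sum_{n_1,\ldots,n_h\ge 0}\frac{(a^1)^{n_1}\cdots(a^h)^{n_h}}{n_1!\cdots n_h!}\,\operatorname{ad}_{\bar V_{G_1}}^{n_1}\cdots\operatorname{ad}_{\bar V_{G_h}}^{n_h}(V_F).$$
An induction on $n_1+\cdots+n_h$ would show that each coefficient has the form $V_H+X$ with $H\in L$ and $X$ a horizontal (Cartan) vector field. Indeed, using $\bar V_{G_i}=V_{G_i}-\sum_j g_i^j D_j$, one computes $[\bar V_{G_i},V_H]=V_{[G_i,H]}+\sum_j V_H(g_i^j)\,D_j$, so the evolutionary part stays in $L$; and $[\bar V_{G_i},\mathcal{C}]\subseteq\mathcal{C}$ because $\bar V_{G_i}$ is a symmetry of the Cartan distribution, so the horizontal part stays in $\mathcal{C}$.

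Evaluating at $q\in\mathcal{H}$, the hypothesis gives $V_H(q)\in T_q\mathcal{H}\oplus\spann\{V_{G_j}(q)\}$ for every $H\in L$. Combined with the decomposition $V_{G_j}(q)=\bar V_{G_j}(q)+\sum_l g_j^l(q)D_l(q)\in T_q\mathcal{K}$ (Lemma \ref{lemma_main2}) and the inclusion $T_q\mathcal{H}\subseteq T_q\mathcal{K}$, this gives $V_H(q)\in T_q\mathcal{K}$. The horizontal part $X(q)$ sits in $\mathcal{C}_q\subseteq T_q\mathcal{K}$ by the same Lemma. Hence every Taylor coefficient at $q$ lies in the finite-dimensional, therefore closed, subspace $T_q\mathcal{K}\subset T_q\jinf(M,N)$, and the limit of the convergent series lies there too.

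The main obstacle is analytic rather than algebraic: one must justify the Lie-series expansion on the infinite-dimensional Fr\'echet manifold $\jinf(M,N)$. This is precisely where the real-analyticity hypothesis enters, together with the strong-characteristic structure of the $\bar V_{G_i}$ (cf.\ Theorem \ref{theorem_strong_characteristic} and the triangular ODE structure of the flow), which makes each $\Phi^i_{a^i}$ act analytically and coherently on the finite-dimensional pieces $\mathcal{G}_k$. A minor but conceptually relevant subtlety is that $L$ may fail to be finite-dimensional as an abstract Lie algebra; this causes no harm because the hypothesis is applied elementwise to each $H$ appearing in the induction.
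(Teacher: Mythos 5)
Your proposal is correct and follows essentially the same route as the paper: represent a point of $\mathcal{K}$ as the image of a point of $\mathcal{H}$ under the characteristic flows, use real analyticity to reduce tangency to the vanishing of all iterated $\operatorname{ad}_{\bar V_{G_i}}$-derivatives at $\alpha=0$, and check each term via the hypothesis $L|_{\mathcal{H}}\subset T\mathcal{H}\oplus\spann\{V_{G_1},\ldots,V_{G_h}\}$ together with $\mathcal{C}\subset T\mathcal{K}$. Your explicit decomposition of each iterated bracket into an evolutionary part $V_H$ with $H\in L$ plus a Cartan-horizontal remainder is a detail the paper leaves implicit, but it is the same argument.
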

\begin{proof}
We note that the functions  $K^i$  defined by \eqref{equation for K} are real analytic if the vector fields $\bar{V}_{G_i}$ and  the submanifold $\mathcal{H}$ are real analytic. \\
The vector field $V_F$ is in $T\mathcal{K}$ if for any $p_0 \in \mathcal{K}$ and any $K^i$ we have
$$V_F(K^i)(p_0)=0.$$
We know that if $p_0 \in \mathcal{K}$ there exists an $\alpha=(a^1,...,a^{h}) \in \mathbb{R}^{h}$  and $p_1\in \mathcal{H}$ such that
$$p_0=\Phi^h_{a^h}(...(\Phi^{1}_{a^{1}}(p_1))...).$$
Moreover, being $K^i$ invariants of $ \Phi^j_{a^j}$  we have
\begin{eqnarray*}
V_F(K^i)(p_0)&=&{\bold{\Phi}}^*_{\alpha}(V_F(K^i))(p_1)\\
&=&{\bold{\Phi}}^*_{\alpha}(V_F)(K^i)(p_1).
\end{eqnarray*}
Since the previous expression is real analytic it is sufficient to prove that any derivative of any order with respect $a^i$ evaluated in $(a^1,...,a^{h})=0$ is zero.
It is easy to  verify that
$$\partial^{k_1}_{a^1}(...\partial^{k_{h}}_{a^{h}}(\tilde{\bold{\Phi}}^*_{\alpha}(V_F))...)|_{\alpha=0}=\bar{V}^{k_h}_{G_h}
(...(\bar{V}^{k_1}_{G_1}(V_F))...),$$
where we use the notation
$$\bar{V}^k_{G_i}(X)=\underbrace{[\bar{V}_{G_i},[...[\bar{V}_{G_i},X]...]]}_{k\text{ times}}.$$
By hypothesis $\bar{V}^{k_h}_{G_h}(...(\bar{V}^{k_1}_{G_1}(V_F))...) \in T\mathcal{K}|_{\mathcal{H}}$ and so for any $p_1 \in \mathcal{H}$ and any $K^i$
$$\bar{V}^{k_1}_{G_1}(...(\bar{V}^{k_{h'}}_{G_{h'}}(V_F))...)(K^i)(p_1)=0.$$
\hfill\end{proof}

\begin{remark}
If $\bar{V}_{G_i},V_F$ are real analytic and $\mathcal{H}$ is defined by real analytic equations, Theorem \ref{theorem_main1} implies Theorem \ref{theorem_main2}. On the other hand  Theorem \ref{theorem_main1} turns out to  be very useful  when we consider  smooth (not analytic) invariant manifolds $\mathcal{H}$.
\end{remark}

\begin{remark}
It is important to note that Theorems \ref{theorem_main1} and \ref{theorem_main2} hold also if $\mathcal{H}$ is a manifold with boundary.
In this case if $V_{G_{1}},...,V_{G_{h}} \in T(\partial\mathcal{H} )$  we obtain that $\mathcal{K}$ is also a local manifold with boundary.
\end{remark}

\section{Examples}\label{section_examples}

\subsection{The general algorithm}\label{subsection_general_algorithm}

For the convenience of the reader we start this subsection  describing  the general algorithm  we use in the   examples. Given a PDE of the form
\begin{equation}\label{prototipo_es}
\partial_t(u^i)=c^0(t)F^i(x,u,u_{\sigma})+\sum_{k=1}^h c^k(t) G^i_k(x,u,u_{\sigma}),
\end{equation}
where $G_k \in \mathcal{F}^n$ admit characteristic flows, the first step is to compute the characteristic flows of $G_k$ integrating the characteristic vector fields $\bar{V}_{G_k}$ and obtaining the local diffeomorphisms $\Phi^i_a$. Then,  considering  the  differential constraint $\mathcal{H}$ for $F$  given by the equations $f^i(x,u,u_{\sigma})=0$  and  their differential consequences $D^{\sigma}(f^i)=0$, we choose $h$ independent functions $g^i$ between the functions $f^i,D^{\sigma}(f^i)$  such that the matrix $(\bar{V}_{G_k}(g^i))$ is non-singular and we solve the equations
\begin{equation}\label{equation_algorithm}
\bold{\Phi}_{\alpha}^*(g^i)(x,u,u_{\sigma})=0,
\end{equation}
where $\alpha=(a^1,...,a^h)$, obtaining  $a^i=A^i(x,u,u_{\sigma})$.
Hence, the new differential constraint $\mathcal{K}$ is obtained by replacing $\alpha$ with $(A^1(x,u,u_{\sigma}),...,A^h(x,u,u_{\sigma}))$ in the expression
$$\bold{\Phi}^*_{(A^1(x,u,u_{\sigma}),...,A^h(x,u,u_{\sigma}))}(h^i)=0,$$
where $h^i$ are all the other functions $f^j,D^{\sigma}(f^k)$. \\
We remark that, in order to integrate the system of ODEs representing the evolution equation on $\mathcal{K}$ and the connection $\mathcal{C}|_{\mathcal{K}}$
representing the reduction function $K(x,b)$, we have to compute the coordinate expressions for the vector fields $V_{F},V_{G_k},D_i$ restricted to $\mathcal{K}$. \\
For this purpose we choose a coordinate system
given by $x^i$, some coordinate system $y^i$ on $\mathcal{H}$ and  the functions $a^i=A^i(x,u,u_{\sigma})$.\\
Using coordinates $(x^i,y^j,a^k)$  the vector fields $V_{F},V_{G_k},D_i$ have a rather simple expression. Obviously  $D_i(x^j)=\delta^j_i$, but we have also  $D_i(a^k)|_{\mathcal{K}}=0$.
Furthermore, if the hypotheses of Theorem \ref{theorem_main1} hold,  $V_{G_k}(a^k),V_{F}(a^k)$ depend only on the coordinates $a^k$.
Indeed on $\mathcal{H}$, or equivalently on the submanifold  $a^k=0$ in $\mathcal{K}$, we have  $\partial_{a^k}=-\bar{V}_{G_k}$ (the minus sign owing  to the fact that we use the pull-back in \refeqn{equation_algorithm}). This means that for $\alpha=(a^1,...,a^h)$ we have
$$\partial_{a^k}=\bold{\Phi}^*_{\alpha}(-\bar V_{G_k})$$
and it is easy to prove that $\partial_{a^k}=\sum_j \tilde{C}^j_k(\alpha) \bar{V}_{G_j}$ and so
$$\bar{V}_{G_k}=\sum_j C^j_k(\alpha) \partial_{a^j},$$
where $C$ is the inverse matrix of $\tilde{C}$. \\
Since $\bar{V}_{G_k}=V_{G_k}-\sum_l \tilde{h}^l D_l$ (for some functions $\tilde{h}^l \in \mathcal{F}$) and  $D_i(a^k)=0$,
the expression $V_{G_k}(a^l)$ depends only on $a^1,...,a^h$. The situation is completely similar for $V_F$.\\
\begin{remark}\label{remark_algorithm}
It is important to note that, in the hypotheses of Theorem \ref{theorem_main1},  $V_{G_k}(a^l)$ and  $V_{F}(a^l)$
do not depend on the choice of $\mathcal{H}$ and on the functions $g^i$ but  only on the order
we choose to apply the pull-back in equation \refeqn{equation_algorithm}.
\end{remark}
Once we have  the expressions of $V_F,D_i,V_{G_k}$ in coordinates $(x^i,y^j,a^k)$
\begin{eqnarray*}
V_F&=&\sum_j\phi_0^j \partial_{y^j}+\sum_k \psi_0^k \partial_{a_k}\\
V_{G_h}&=&\sum_j\phi_h^j \partial_{y^j}+\sum_k \psi_h^k \partial_{a_k}\\
D_i&=&\partial_{x^i}+\sum_j\tilde{\phi}_i^j \partial_{y^j}
\end{eqnarray*}
the reduced system for the unknown functions  $Y^i(x,t), \tilde{A}^l(x,t)$ is
\begin{eqnarray*}
\partial_t(Y^i)&=&\sum_{k=0}^h c^k(t) \phi_k^i(Y(x,t),\tilde{A}(x,t))\\
\partial_t(\tilde{A}^l)&=&\sum_{k=0}^h c^k(t) \psi_k^i(\tilde{A}(x,t))\\
\partial_{x^j}(Y^i)&=&\tilde{\phi}_j^i(Y(x,t),\tilde{A}(x,t))\\
\partial_{x^j}(\tilde{A}^l)&=&0.
\end{eqnarray*}

\subsection{A non-linear transport equation with dissipation}\label{subsection_transport_equation}

Let us consider the following equation
\begin{equation}\label{equation_transport}
 \partial_t(u)=c^1(t) u u_x+c^2(t) u.
\end{equation}
Since, for $c^1(t)=1,c^2(t)=0$,  equation \refeqn{equation_transport} is a non-linear transport equation $\partial_t(u)=u u_x$ (see for example \cite{Whitham}),
 if $c^2(t)\leq 0$ and $u\geq 0$ the term $u$ can be considered as a dissipation factor. Furthermore being equation \refeqn{equation_transport}
 for $c^1(t)=1,c^2(t)=0$  the first equation of an Hamiltonian hierarchy (see, e.g.,  \cite{Olver3}), the complete equation \refeqn{equation_transport} can be considered as a non-isospectral perturbation.\\
We can see \eqref{equation_transport} as an evolution PDE of the form \eqref{prototipo_es} with
\begin{eqnarray*}
F&=&0\\
G_1&=&u u_x\\
G_2&=& u.
\end{eqnarray*}
In this case, since $V_F,V_{G_1},V_{G_2}$ are not linearly independent, Theorem \ref{theorem_main1} can not be applied. However Lemma \ref{lemma_main} and \ref{lemma_main2} are enough to provide a differential constraint $\mathcal{K}$ for \eqref{equation_transport}, since $V_F \in T\mathcal{K}$ being $V_F=0$.
The characteristic vector fields for $G_1$ and $G_2$ are  $\bar{V}_{G_1}=V_{G_1}-u D_x$ and  $\bar{V}_{G_2}=V_{G_2}$ with corresponding characteristic flows
\begin{eqnarray*}
\Phi^1_a(x)&=&x-au\\
\Phi^1_a(u)&=&u\\
\Phi^1_a(u_x)&=&\frac{u_x}{1-au_x}\\
\Phi^1_a(u_k)&=&\frac{D_x(\Phi^1_a(u_{k-1}))}{1-au_x}\\
\Phi^2_b(x)&=&x\\
\Phi^2_b(u_k)&=&e^bu_k.
\end{eqnarray*}
In order to construct  a differential constraint for \eqref{equation_transport} we consider the differential constraint $\mathcal{H}$ for $V_F$  defined by $f^1:=u-x^2=0$ and  all  its differential consequences $f^2:=u_x-2x=0$, $f^3:=u_{xx}-2=0$ and $f^k:=u_{k-1}=0$ for $k>2$.
From  equations $\bold{\Phi}^*_{\alpha}(f^1)=\bold{\Phi}^*_{\alpha}(f^2)=0$ we obtain
\begin{eqnarray*}
a&=&-\frac{u_x(xu_x-2u)}{4(xu_x-u)^2}\\
b&=&\log\left(\frac{4(xu_x-u)^2}{uu_x^2}\right)
\end{eqnarray*}
and conversely we have
\begin{eqnarray*}
u&=&\frac{2ax+1-\sqrt{4ax+1}}{2a^2e^b}\\
u_x&=&-\frac{1-\sqrt{4ax+1}}{ae^b\sqrt{4ax+1}}.
\end{eqnarray*}
Using the previous expressions for $a,b$ in $\bold{\Phi}^*_{\alpha}(f^3)=0$ we find that $\mathcal{K}$ is defined by the vanishing  of the function
$$2u^2u_{xx}-xu_x^3+uu_x^2$$
and all its differential consequences.
Choosing   the coordinate system $(x,a,b)$ on $\mathcal{K}$ we have
\begin{eqnarray*}
V_{G_1}&=&-e^{-b}\partial_a\\
V_{G_2}&=&-\partial_b\\
D_x&=&\partial_x.
\end{eqnarray*}
Hence  the function $A(x,t)$ and $B(x,t)$ have to satisfy
\begin{eqnarray*}
\partial_t(A)&=&-c^1(t)e^{-B}\\
\partial_t(B)&=&-c^2(t)\\
\partial_x(A)&=&0\\
\partial_x(B)&=&0
\end{eqnarray*}
and we get
\begin{eqnarray}
A(x,t)&=&A(t)=a_0-\int_0^t{c^1(s)e^{-b_0+\int_0^s{c^2(\tau)d\tau}}ds}\label{equation_a1}\\
B(x,t)&=&B(t)=b_0-\int_0^t{c^2(s)ds}.\label{equation_b1}
\end{eqnarray}
Therefore,  using the expression of $u$ in terms of  $x,a,b$ we obtain
$$U(x,t)=\frac{2A(t)x+1-\sqrt{4A(t)x+1}}{2A(t)^2e^{B(t)}}.$$
Indeed Remark \ref{remark_algorithm} provides a more general result.
In fact,  if we consider $A(t),B(t)$  given by equations \refeqn{equation_a1} and \refeqn{equation_b1}, and we chose  any  manifold $\mathcal{H}$ of the form  $u=H(x)$ (with $H(x)$  a non-linear function) we have that any function $U(x,t)$ solution to
\begin{equation}\label{equation_transport1}
U(x,t)-e^{-B(t)}H(x-A(t)e^{B(t)}U(x,t))=0
\end{equation}
 is a solution to equation \refeqn{equation_transport}.
Hence in this case  the use of the implicit form for the constraint turns out to be  more effective than the direct use of the reduction function $K$.

\subsection{Modified heat equation}\label{subsection_heat_equation}

Let us consider the following equation
$$u_t=c^0u_{xx}+c^1(t) xu+c^2(t) x^2 u,$$
which can be seen as an equation of the form \eqref{prototipo_es} with
\begin{eqnarray*}
F&=&u_{xx}\\
G_1&=&xu\\
G_2&=&x^2u.
\end{eqnarray*}
This equation (with $c^1,c^2$ constants) has already been studied  in \cite{Rosencrans,Steinberg} and coincides with the Zakai equation of the simplest Kalman filter in one dimension if $c^1=1/2$ and $c^2$
is the derivative of a Brownian motion process (see \cite{Bain}).\\
The vector fields  $V_{G_1},V_{G_2}$ form an abelian Lie algebra and admit strong characteristics. Furthermore  $F$, $G_1$, $G_2$, $\tilde{G}_1=u_x$,
$\tilde{G}_2=xu_x$, $\tilde{G}_3=u$ form a Lie algebra. \\
Let $\mathcal{H}$ be the submanifold of $\jinf(\mathbb{R},\mathbb{R})$ defined by  $f^1:=u_x=0$ and  all its differential consequences $f^{k+1}:=D^k_x(f^1)=0$ for any $k \in \mathbb{N}$. It is easy to prove that $V_F,V_{\tilde {G}_i} \in T\mathcal{H}$ for any $i=1,2,3$. In this situation $F,G_1,G_2$  satisfy the hypotheses of Theorem \ref{theorem_main2} on $\mathcal{H}$
and, in order to  find the equations of  $\mathcal{K}$, we start by computing
the characteristic flows of $G_1$ and $G_2$
\begin{eqnarray*}
&\Phi^{1*}_a(x)=\Phi^{2*}_b(x)=x&\\
&\Phi^{1*}_a(u)=e^{ax}u&\\
&\Phi^{2*}_b(u)=e^{bx^2}u.&
\end{eqnarray*}
By Theorem \ref{theorem_flow} we obtain the characteristic flows for the derivatives of an  order
\begin{eqnarray*}
\bold{\Phi}^*_{\alpha}(u_x)&=&e^{ax+bx^2}((a+2bx)u+u_x)\\
\bold{\Phi}^*_{\alpha}(u_{xx})&=&e^{ax+bx^2}(((a+2bx)^2+2b)u+2(a+2bx)u_x+u_{xx}).
\end{eqnarray*}
and, using  equations $\bold{\Phi}^*_{\alpha}(u_x)=\bold{\Phi}^*_{\alpha}(u_{xx})=0$, we get
\begin{eqnarray*}
u_{x}&=&-(a+2bx)u\\
u_{xx}&=&u((2bx+a)^2-2b).
\end{eqnarray*}
Therefore we can  express  $a,b$ as functions of $u_x,u_{xx}$ as follows
\begin{eqnarray*}
a&=&\frac{(uu_{xx}-u_x^2)x-uu_x}{u^2}\\
b&=&\frac{u_x^2-uu_{xx}}{2u^2}
\end{eqnarray*}
and equation $\Phi_{\alpha}(u_{xxx})=0$ defining (together with all its differential consequences) the manifold $\mathcal{K}$ is
\begin{eqnarray*}
u_{xxx}&=&\frac{3u_xu_{xx}}{u}-\frac{2u_x^3}{u^2}.\\
\end{eqnarray*}
Since the manifold $\mathcal{K}$ is  four dimensional,  we use  coordinates $(x,u,a,b)$ on it and,  computing
the components of the vector fields $V_{F_i}$, $V_{G_j}$ and $D_x$,  we get
\begin{eqnarray*}
V_{F}&=&u((2bx+a)^2-2b)\partial_u-4ab\partial_a-4b^2\partial_b\\
V_{G_1}&=&xu\partial_u-\partial_a\\
V_{G_2}&=&x^2u\partial_u-\partial_b\\
D_x&=&\partial_x-u(a+2bx)\partial_u.
\end{eqnarray*}
Hence the equations for $U,A,B$ are
\begin{eqnarray*}
\partial_t(U)&=&U[c^0((2Bx+A)^2-2B)+c^1(t)x+c^2(t)x^2]\\
\partial_t(A)&=&-4c^0AB-c^1(t)\\
\partial_t(B)&=&-4c^0B^2-c^2(t)\\
\partial_x(U)&=&-U(A+2Bx)\\
\partial_x(A)&=&0\\
\partial_x(B)&=&0.\\
\end{eqnarray*}
The previous system has a unique solution such that $U(x_0,t_0)=u_0,A(x_0,t_0)=a_0,B(x_0,t_0)=b_0$. Without loss of generality we can suppose that $x_0=0$ and, in order to simplify computation,  we also suppose  $c^1,c^2 \in \mathbb{R}$. The equations for $A,B,U$ in $t$ derivative form a triangular system (linear in $A,U$ and with Riccati form not dependent on time in $B$) and can be solved explicitly getting some functions $U^0(t)=U(0,t), A^0(t)=A(0,t), B^0(t)=B(0,t)$ satisfying $U^0(t_0)=u_0, A^0(t_0)=a_0$ and $B^0(t_0)=b_0$. \\
Hence we can explicitly integrate the equations for $x$  and we get
\begin{eqnarray*}
A(x,t)&=&A^0(t)\\
B(x,t)&=&B^0(t)\\
U(x,t)&=&e^{-A^0(t)x-B^0(t)x^2}U^0(t).
\end{eqnarray*}
The function $U(x,t)$ is the well known Gaussian solution for the modified heat equation.

\subsection{An integrable two dimensional system}\label{subsection_integrable}

Let us consider the following system
\begin{eqnarray*}
u_t&=&c^0u_{xx}+\sum_{i,j \leq N}c^{i,j}(t)v_iv_j\\
v_t&=&v_{xx},
\end{eqnarray*}
where $v_{k}=D^k_x(v)$ for $k>0$ and $v_0=v$. In the case $c^{i,j}=0$ for $i,j\geq 2$ and $c^{i,j}=cost$ for $i,j<2$,
this system admits an infinite number of higher order symmetries and a recursion operator (see for example \cite{Beukers}). Furthermore denoting by
\begin{eqnarray*}
F&=&\left(\begin{array}{c}
c^0 u_{xx}\\
v_{xx}\end{array} \right),\\
G_{i,j}&=&\left(\begin{array}{c}
v_i v_j \\
0 \end{array}\right),
\end{eqnarray*}
it is easy to prove that $V_{F},V_{G_{i,j}}$ form a pro-finite Lie algebra so that  this provides a toy-model for the pro-finite Lie algebras
used in non-linear filtering problem (see \cite{Hazewinkel}).
If we consider the submanifold $\tilde{\mathcal{H}}\subset \jinf(\mathbb{R},\mathbb{R}^2)$ given by the equation
$$g=v_k-\sum_{i<k} d_i v_i=0 \qquad \qquad d_i \in \mathbb{R}$$
 and its differential consequences, obviously $D_x \in T\tilde{\mathcal{H}}$. Furthermore we have  $V_F, V_{G_{i,j}} \in T\tilde{\mathcal{H}}$
 and $L_1=\spann\{V_{G_{i,j}}\}$  restricted on $\tilde{\mathcal{H}}$ is finite dimensional. So putting $\tilde{L}=\spann\{V_{G_{i,j}}|i,j<k\}$,
 if $\mathcal{H}$ is a finite dimensional submanifold of $\tilde{\mathcal{H}}$ such that $\tilde{L}$ has maximal rank on $\mathcal{H}$,
 $D_x \in T\mathcal{H}$ and $V_F \in T\mathcal{H}$.\\
The hypotheses of Theorem \ref{theorem_main2} are satisfied: indeed, denoting by  $L=\spann\{V_F,V_{G_{i,j}}\}$, we have
$$[L,L]|_{\tilde{\mathcal{H}}}=L_1|_{\tilde{\mathcal{H}}}=\tilde{L}|_{\tilde{\mathcal{H}}}.$$
Hereafter, in order to simplify computation, we take $c^{i,j}=0$ for $i,j>1$, $c^0=1$ and $g=v_{xx}-\beta v$.
In this case we choose as submanifold $\mathcal{H}$ of $\tilde{\mathcal{H}}$ the set of zeros of $h=u_x-\gamma u$. Hence, writing
$V_1=V_{G_{0,0}},V_2=V_{G_{1,1}},V_3=V_{G_{0,1}}$ and denoting by $\Phi^i$ the corresponding characteristic flows, we have
\begin{eqnarray*}
\Phi^1_a\left(\begin{array}{c}
u\\
v\end{array}\right)&=&\left(\begin{array}{c}
u+av^2\\
v\end{array}\right),\\
\Phi^2_b\left(\begin{array}{c}
u\\
v\end{array}\right)&=&\left(\begin{array}{c}
u+bv_x^2\\
v\end{array}\right),\\
\Phi^3_c\left(\begin{array}{c}
u\\
v\end{array}\right)&=&\left(\begin{array}{c}
u+cvv_x\\
v\end{array}\right).
\end{eqnarray*}
So, from  $\bold{\Phi}_{\alpha}^*(h)=\bold{\Phi}_{\alpha}^*(D_x(h))=\bold{\Phi}_{\alpha}^*(D^2_x(h))=0$, we obtain that $\mathcal{K}$ is defined by
$$u_{4}=(u_{xxx}-4\beta u_x)\gamma+4\beta u_{xx}.$$
On $\mathcal{K}$ we use the natural coordinate system $(x,u,v,v_x,a,b,c)$, where
\begin{eqnarray*}
a&=&\frac{((2\beta u-u_{xx})v_x^2+2\beta u_x vv_x-2\beta^2uv^2)\gamma^3+(u_{xxx}-4\beta u_x)v_x^2\gamma^2+
}{
(2v_x^4-4\beta v^2v_x^2+2\beta^2v^4)\gamma^3+(-8\beta v_x^4+16\beta^2v^2v_x^2-8\beta^3v^4)\gamma}\\
&&+\frac{((4\beta u_{xx}-8\beta^2u)v_x^2-2\beta u_{xxx}vv_x+8\beta^3uv^2)\gamma+(8\beta^2u_x-2\beta u_{xxx})v_x^2}{
(2v_x^4-4\beta v^2v_x^2+2\beta^2v^4)\gamma^3+(-8\beta v_x^4+16\beta^2v^2v_x^2-8\beta^3v^4)\gamma}\\
&&+\frac{(2\beta^2u_{xxx}-8\beta^3u_x)v^2}{(2v_x^4-4\beta v^2v_x^2+2\beta^2v^4)\gamma^3+(-8\beta v_x^4+16\beta^2v^2v_x^2-8\beta^3v^4)\gamma}\\
b&=&-\frac{
(2uv_x^2-2u_xvv_x+(u_{xx}-2\beta u)v^2)\gamma^3+(4\beta u_x-u_{xxx})v^2\gamma^2}{(2v_x^4-4\beta v^2 v_x^2+2\beta^2 v^4)\gamma^3+(-8\beta v_x^4+16\beta^2 v^2v_x^2-8\beta^3v^4)\gamma}\\
&&-\frac{(-8\beta u v_x^2+2u_{xxx}vv_x+(8\beta^2u-4\beta u_{xx})v^2)\gamma+(8\beta u_x-2 u_{xxx})v_x^2}{
(2v_x^4-4\beta v^2 v_x^2+2\beta^2 v^4)\gamma^3+(-8\beta v_x^4+16\beta^2 v^2v_x^2-8\beta^3v^4)\gamma}\\
&&-\frac{+(2\beta u_{xxx}-8\beta^2u_x)v^2}{
(2v_x^4-4\beta v^2 v_x^2+2\beta^2 v^4)\gamma^3+(-8\beta v_x^4+16\beta^2 v^2v_x^2-8\beta^3v^4)\gamma}\\
c&=&-\frac{(u_xv_x^2-u_{xx}vv_x+\beta u_xv^2)\gamma^2+(u_{xxx}-4\beta u_x)v v_x\gamma-u_{xxx}v_x^2}{(v_x^4-2\beta v^2 v_x^2+\beta^2v^4)\gamma^2-4\beta v_x^4+8\beta^2v^2v_x^2-4\beta^3v^4}\\
&&-\frac{4\beta u_{xx}vv_x-\beta u_{xxx}v^2}{(v_x^4-2\beta v^2 v_x^2+\beta^2v^4)\gamma^2-4\beta v_x^4+8\beta^2v^2v_x^2-4\beta^3v^4}
\end{eqnarray*}
In this coordinate system we have
\begin{eqnarray*}
V_F&=&\left((bv_x^2+cvv_x+av^2+u)\gamma^2-(2b\beta+2a)v_x^2-4\beta cvv_x-(2b\beta^2+2a\beta)v^2\right)\partial_u\\
&&+\beta v\partial_v+\beta v_x\partial_{v_x}+2\beta^2b\partial_a+2a\partial_b+2\beta c \partial_c\\
V_1&=&v^2\partial_u-\partial_a\\
V_2&=&v_x^2\partial_u-\partial_b\\
V_3&=&vv_x\partial_u-\partial_c\\
D_x&=&\partial_x+\left((bv_x^2+cvv_x+av^2+u)\gamma-cv_x^2-(2b\beta+2a)vv_x-\beta cv^2\right)\partial_u\\
&&+ v_x\partial_{v}+\beta v\partial_{v_x}.
\end{eqnarray*}
Fixing $(t_0,x_0)$ and the initial conditions $U(t_0,x_0)=u_0,V(t_0,x_0)=v_0...$, the functions $U,V, V_x,A,B,C$ must satisfy the following overdetermined system of equations
\begin{eqnarray*}
\partial_t(A)&=&2\beta^2B-c^{0,0}(t)\\
\partial_t(B)&=&2A-c^{1,1}(t)\\
\partial_t(C)&=&2 \beta C-c^{0,1}(t)\\
\partial_t(U)&=&\gamma^2 U+(BV_x^2+CVV_x+AV^2)\gamma^2-(2B\beta+2A)V_x^2-4\beta CVV_x\\
&&-(2B\beta^2+2A\beta)V^2+c^{0,0}(t)V^2+c_{1,1}(t)V_x^2+c^{0,1}(t)VV_x\\
\partial_t(V)&=&\beta V\\
\partial_t(V_x)&=& \beta V_x\\
\partial_x(A)&=&0\\
\partial_x(B)&=&0\\
\partial_x(C)&=&0\\
\partial_x(U)&=&\gamma U +(BV_x^2+CVV_x+AV^2)\gamma-CV_x^2-(2B\beta+2A)VV_x-\beta CV^2\\
\partial_x(V)&=& V_x\\
\partial_x(V_x)&=&\beta V.
\end{eqnarray*}
In the part of system with the $t$ derivative, the equations for  $A,B,C$  do not depend on the other variables and  are linear and non-homogeneous with respect to $A,B,C$. So, considering the matrix
$$
S(t)=\left(\begin{array}{ccc}
\frac{1}{2}\cosh(2\beta t) & \frac{\beta}{2} \sinh(2 \beta t) & 0\\
\frac{1}{2\beta} \sinh(2 \beta t) & \frac{1}{2} \cosh(2 \beta t) & 0\\
0& 0 &e^{2 \beta t}
\end{array}\right),
$$
we have
$$
\left(\begin{array}{c}
A^0(t)\\
B^0(t)\\
C^0(t)
\end{array}\right)=S(t-t_0)\cdot\left(\begin{array}{c}
a_0\\
b_0\\
c_0
\end{array} \right)+S(t-t_0)\int_{t_0}^t{S(-s+t_0)\cdot\left(\begin{array}{c}
c^{0,0}(s)\\
c^{1,1}(s)\\
c^{0,1}(s)\end{array} \right)ds},
$$
where $A^0(t)=A(x_0,t),etc.$. Moreover, since the equations in $t$ for $V,V_x$ are linear, we have
\begin{eqnarray*}
V^0(t)&=&v_0 e^{\beta (t-t_0)}\\
V^0_{x}(t)&=&v_{x,0} e^{\beta(t-t_0)}
\end{eqnarray*}
and, being also the equation for $U$  linear in $U$ and depending on $v^0,v^0_x,...$,   we obtain
\begin{eqnarray*}
U^0(t)&=&u_0 e^{\gamma^2 (t-t_0)}+e^{\gamma^2(t-t_0)}\left(\int_{t_0}^t{e^{-\gamma^2(s-t_0)}\gamma^2(B^0(s)V^0_x(s)^2+C^0(s)V^0(s)V^0_x(s))ds}\right.\\
&&+\left.\int_{t_0}^t{e^{\gamma^2(s-t_0)}(A^0(s)V^0(s)^2\gamma^2+(-2\beta B^0(s)-2A^0(s))V^0_x(s)^2)ds}\right.\\
&&+\left.\int_{t_0}^{t}{e^{-\gamma^2(s-t_0)}((-4\beta C^0(s)V^0(s)V^0_x(s)-2\beta^2B^0(s)-2\beta A^0(s))V^0(s)^2)ds}\right.\\
&&+\left.\int_{t_0}^t{e^{\gamma^2(s-t_0)}(c_{0,0}(s)V^0(s)^2+c_{1,1}(s)V^0_x(s)^2+c_{0,1}(s)V^0(s)V^0_x(s))ds}\right).
\end{eqnarray*}
Finally, integrating the equations for $x$  we get
\begin{eqnarray*}
A(x,t)&=&A^0(t)\\
B(x,t)&=&B^0(t)\\
C(x,t)&=&C^0(t)\\
V(x,t)&=&\frac{V^0_x(t)}{\sqrt{\beta}}\sinh(\sqrt{\beta} (x-x_0))+V^0(t)\cosh(\sqrt{\beta}(x-x_0))\\
V_x(x,t)&=&\sqrt{\beta} V^0(t) \sinh(\sqrt{\beta} (x-x_0))+ V^0_x(t) \cosh(\sqrt{\beta}(x-x_0))\\
U(x,t)&=&U^0(t)e^{\gamma (x-x_0)}+e^{\gamma(x-x_0)}\left(\int_{x_0}^x{e^{-\gamma(y-x_0)}B^0(t)V_x(y,t)^2\gamma dy}\right.\\
&&\int_{x_0}^x{e^{-\gamma(y-x_0)}((C^0(t)V(y,t)V_x(y,t)+a^0(t)V(y,t)^2-C^0(t)V_x(y,t)^2)\gamma dy}\\
&&+\left.\int_{x_0}^x{e^{-\gamma(y-x_0)}((-2\beta B^0(t)-2A^0(t))V(y,t)V_x(y,t)-\beta c^0(t)V(y,t)^2)dy}\right).
\end{eqnarray*}

\subsection{Perturbed KdV equation}\label{subsection_KdV}

Let us consider the following equation
\begin{equation}\label{equationKdV}
\partial_t(u)=(u_{xxx}+uu_x)+c^1(t)+c^2(t)(xu_x+2u),
\end{equation}
corresponding to \eqref{prototipo_es} with
\begin{eqnarray*}
F&=&u_{xxx}+uu_x\\
G_1&=&1\\
G_2&=&xu_x+2u.
\end{eqnarray*}
If $c^2=0$ and $c^1$  is the derivative of a Brownian motion,  equation \eqref{equationKdV} can be seen as a stochastic perturbation of KdV equation (see \cite{Wadati,Xie}) whereas in all the other cases  \eqref{equationKdV} can be interpreted as  a non-isospectral perturbation of KdV equation (see e.g.\cite{Calogero,Gordoa_Pickering_Wattis}).
As submanifold $\mathcal{H}$ we consider the annihilator of $g=u_{xx}+\frac{1}{2}u^2-\beta_0 u$ (where $\beta_0 \in \mathbb{R}_+$) which contains the one soliton solution to the KdV equation with velocity $\beta_0$. \\
The flows of $V_1=V_{G_1}$ and of $\bar{V}_2=V_{G_2}-xD_x$ are given by
\begin{eqnarray*}
\Phi^1_{a}(u)&=&u+a\\
\Phi^2_{b}(u_k)&=& e^{(k+2)b} u_k,
\end{eqnarray*}
where $u_k=D^k(u)$ and $u_0=u$. Hence, solving $\bold{\Phi}^*_{\alpha}(g)=\bold{\Phi}^*_{\alpha}(D_x(g))=0$, we obtain
\begin{eqnarray*}
b=\frac{1}{4}\log\left(\frac{\beta^2_0u^2_x}{u_{xxx}^2+2u_x^2u_{xx}}\right)\\
a=\sqrt{\frac{u_{xxx}^2+2u_x^2u_{xx}}{u_x^2}}-\frac{u_{xxx}}{u_x}-u\\
\end{eqnarray*}
and  $\mathcal{K}$ is given by the zero set of
$$u_{xxxx}-\frac{(u_{xxx}u_{xx}-u_x^3)}{u_x}.$$
In order to simplify  computation we introduce the  coordinate system $(x,\tilde{u},a,\beta,\gamma)$ on $\mathcal{K}$, where
\begin{eqnarray*}
\beta&=&e^{2b}\\
\gamma&=&\frac{1}{2}u_x^2+\frac{1}{6}u^3-\frac{1}{2}\left(\frac{\beta_0}{\beta}-a\right)u^2-\left(\frac{a\beta_0}{\beta}-\frac{a^2}{2}\right)u\\
\tilde{u}&=&\left\{\begin{array}{ccc}
\int_{d}^{u}{\frac{1}{\sqrt{2\left(\gamma-\frac{1}{6}z^3+\frac{1}{2}\left(\frac{\beta_0}{\beta}-a\right)z^2+\left( \frac{a\beta_0}{\beta}-\frac{a^2}{2}\right)  z\right)}}dz}&\text{ if }&u_x>0\\
-\int_{d}^{u}{\frac{1}{\sqrt{2\left(\gamma-\frac{1}{6}z^3+\frac{1}{2}\left(\frac{\beta_0}{\beta}-a\right)z^2+\left( \frac{a\beta_0}{\beta}-\frac{a^2}{2}\right)  z\right)}}dz}&\text{ if }&u_x<0
\end{array}\right.
\end{eqnarray*}
(here  $d \in \mathbb{R}$ is such that $\gamma-\frac{1}{6}d^3+\frac12(\frac{\beta_0}{\beta}-a)d^2+\left( \frac{a\beta_0}{\beta}-\frac{a^2}{2}\right)  d >0$).
Using this coordinate system  it is easy to verify that
\begin{eqnarray*}
V_F&=&\left(\frac{\beta_0}{\beta}-a\right)\partial_{\tilde{u}}\\
V_{G_1}&=&\left(\pm\frac{1}{\sqrt{\gamma-\frac{1}{6}d^3+\frac{1}{2}\left(\frac{\beta_0}{\beta}-a\right)d^2+\left( \frac{a\beta_0}{\beta}-\frac{a^2}{2}\right)  d}} \right) \partial_{\tilde{u}}-\partial_{a}-\left(\frac{a\beta_0}{\beta}-\frac{a^2}{2}\right)\partial_{\gamma}\\
V_{G_2}&=&\left(x\pm \frac{2d}{\sqrt{\gamma-\frac{1}{6}d^3+\frac{1}{2}\left(\frac{\beta_0}{\beta}-a\right)d^2+\left( \frac{a\beta_0}{\beta}-\frac{a^2}{2}\right) d}}-\tilde{u}\right)\partial_{\tilde{u}}+2 a \partial_{a}-2 \beta \partial_{\beta}+6 \gamma\partial_{\gamma}\\
D_x&=&\partial_x+\partial_{\tilde{u}},
\end{eqnarray*}
where in $V_{G_1}$ and $V_{G_2}$ we choose the plus sing if $u_x>0$ and the minus sign if $u_x<0$.\\
The equations for $\tilde{U},A,B,\Gamma$ are
\begin{eqnarray*}
\partial_t(\tilde{U})&=&-c^2(t)\tilde{U}+\left(\frac{\beta_0}{B}-A\right)+\\
&&\pm\frac{c^1(t)+2dc^2(t)}{\sqrt{\Gamma-\frac{1}{6}d^3+\frac{1}{2}\left(\frac{\beta_0}{B}-A\right)d^2+\left( \frac{A\beta_0}{B}-\frac{A^2}{2}\right)  d}}\\
\partial_t(B)&=&-2c^2(t)B\\
\partial_t(A)&=&2c^2(t)A-c^1(t)\\
\partial_t(\Gamma)&=&6c^2(t)\Gamma-\left(\frac{A\beta_0}{B}-\frac{A^2}{2}\right)c^1(t)\\
\partial_x(\tilde{U})&=&1\\
\partial_x(B)&=&0\\
\partial_x(A)&=&0\\
\partial_x(\Gamma)&=&0
\end{eqnarray*}
and this system can be solved as in the previous example. \\
If we consider the particular case $x_0=0$ and $A(0,t_0)=0, B(0,t_0)=1, \Gamma(0,t_0)=0$  we can explicitly compute
\begin{eqnarray*}
B(x,t)&=&B^0(t)=e^{-2C^2(t)}\\
A(x,t)&=&A^0(t)=-(B^0(t))^{-1}\left(\int_{t_0}^t{B^0(s)c^1(s)ds}\right)\\
\Gamma(x,t)&=&\Gamma^0(t)=\frac 12 \beta_0 (B^0(t))^{-1}(A^0(t))^2- \frac 16(A^0(t))^3 ,
\end{eqnarray*}
where $C^2(t)=\int_{t_0}^t{c^2(s)ds}$.
Expressing $\tilde{U}$ as a function of $A^0(t),B^0(t),\Gamma^0(t),u$  we get
\begin{eqnarray*}
\tilde{U}(A^0(t),B^0(t),\Gamma^0(t),u)&=&\pm\int_d^{u}{\frac{dz}{\sqrt{2\left(-\frac{(z+A^0(t))^3}{6}+\frac{\beta_0}{B^0(t)}\frac{(z+A^0(t))^2}{2}\right)}}}\\
&=&\mp 2\sqrt{\frac{B^0(t)}{\beta_0}}\left(\acosh\left(\sqrt{\frac{3\beta_0}{B^0(t)(u+A^0(t))}}\right)\right.\\
&&\left.-\acosh\left(\sqrt{\frac{3\beta_0}{B^0(t)(d+A^0(t))}}\right)\right)
\end{eqnarray*}
and  we obtain
\begin{eqnarray*}
U(x,t)&=&3\beta_0e^{2C^2(t)}\left(\cosh\left(\mp\frac{\sqrt{\beta_0}e^{C^2(t)}\tilde{U}(x,t)}{2}\right.\right.\\
&&\left.\left.+\acosh\left(\sqrt{\frac{3\beta_0}{B^0(t)(d+A^0(t))}}\right)\right)\right)^{-2}-A^0(t).
\end{eqnarray*}
If we solve the equations for $\tilde{U}(x,t)$ we find
$$\tilde{U}(x,t)=x+e^{-C^2(t)}\left(\int_0^t{\left(\frac{\beta_0}{B^0(s)}-A^0(s)\right)ds}\pm\frac{2}{\sqrt{\beta_0}}\acosh\left(\sqrt{\frac{3\beta_0}{B^0(t)(d+A^0(t))}}\right)\right)$$
and we have
\begin{eqnarray*}
U(x,t)&=&3\beta_0e^{2C^2(t)}\left(\cosh\left(\frac{\sqrt{\beta_0}}{2}\left( e^{C^2(t)} x +\int_{t_0}^t{\beta_0e^{3C^2(s)}ds}\right.\right.\right.\\
&&\left.\left.\left. +\int_{t_0}^t{e^{3C^2(s)}\left(\int_{t_0}^s{e^{-2C^2(\tau)}}c^1(\tau)d\tau\right) ds}\right)\right)\right)^{-2}+\\
&&+e^{2C^2(t)}\int_{t_0}^t{e^{-2C^2(s)}c^1(s)ds},
\end{eqnarray*}
where we use the parity of the function $\cosh$ to eliminate $\mp$ sign.

\section{Appendix}\label{appendix}

In this section we discuss the behavior of the Cartan distribution $\mathcal{C}$ under the action of the characteristic flow $\Phi_a$ associated with an evolution vector field $V_G$. An important consequence of the following Theorem  is that $\Phi^*_a(u^i_{\sigma})$ is a polynomial function with respect the variable $u^k_{\sigma'}$ if $|\sigma'|$ is sufficiently large.
\begin{theorem}\label{theorem_flow}
Let $V_G$ be an evolution vector field admitting characteristics  and let $\Phi_a$ be the corresponding  characteristic flow. Denoting by $A$ the $n \times n$ matrix
$$A=(A^j_i):=(D_i(\Phi^*_a(x^j)))|_{i,j},$$
and  by $B=(B^i_j)$ the inverse matrix of $A$, then
\begin{equation}\label{equation_flow1}
\Phi^*_a(D_i)=\sum_j B^j_i D_j
\end{equation}
and,  for any $f \in \mathcal{F}$, we have
\begin{equation}\label{equation_flow2}
\Phi^*_a(D_i(f))=\sum_j B^j_i D_j(\Phi^*_a(f)).
\end{equation}
\end{theorem}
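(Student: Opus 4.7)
The plan is to exploit the fact that the characteristic vector field $\bar V_G = V_G - \sum_i g^i D_i$ from Definition \ref{definition_charatcteristic} is a symmetry of the Cartan distribution $\mathcal{C}$, so that its flow $\Phi_a$ preserves $\mathcal{C}$ pointwise. From this it will follow that $\Phi^*_a D_i$ expands in the local frame $\{D_1,\dots,D_m\}$ of $\mathcal{C}$ as $\sum_j C^j_i D_j$ with functional coefficients $C^j_i$, and the task reduces to identifying $C^j_i$ with the entries $B^j_i$ of the inverse matrix, which is a short algebraic computation. Equation \eqref{equation_flow2} will then follow immediately from \eqref{equation_flow1} via the naturality identity $\Phi^*_a(X(f))=(\Phi^*_a X)(\Phi^*_a f)$.

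First I would verify that $\bar V_G$ is a symmetry of $\mathcal{C}$. By Definition \ref{definition evolution vf}, $V_G$ itself is a symmetry, so $[V_G,\mathcal{C}]\subset \mathcal{C}$; moreover $\sum_i g^i D_i \in \mathcal{C}$ and $\mathcal{C}$ is involutive, so $[\sum_i g^i D_i,\mathcal{C}]\subset \mathcal{C}$. Subtracting gives $[\bar V_G,\mathcal{C}]\subset \mathcal{C}$. Because the flow $\Phi_a$ of $\bar V_G$ is assumed to exist on $\jinf(M,N)$ (this is exactly the hypothesis that $V_G$ admits characteristics), the standard fact that the flow of a symmetry preserves the distribution gives $(\Phi_a)_* \mathcal{C}_p = \mathcal{C}_{\Phi_a(p)}$, and in particular each $\Phi^*_a D_i$ lies in $\mathcal{C}$.

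The core of the argument is then the following. Write $\Phi^*_a D_i = \sum_j C^j_i D_j$ for smooth functions $C^j_i$ still to be determined, and apply the naturality identity to $X=D_i$ and $f=x^k$: the left-hand side $\Phi^*_a(D_i(x^k))$ equals $\Phi^*_a(\delta^k_i)=\delta^k_i$, while the right-hand side is $\sum_j C^j_i\, D_j(\Phi^*_a x^k)=\sum_j C^j_i A^k_j$. Equating yields $\sum_j A^k_j C^j_i = \delta^k_i$, so $C^j_i = B^j_i$, which is \eqref{equation_flow1}. Equation \eqref{equation_flow2} is then read off from
\[
\Phi^*_a(D_i(f))=(\Phi^*_a D_i)(\Phi^*_a f)=\sum_j B^j_i\, D_j(\Phi^*_a f).
\]

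The subtle point I expect to be the main obstacle is justifying that $A$ is actually invertible, so that $B$ is well-defined. Fortunately this comes almost for free from the calculation above, since the identity $\sum_j A^k_j C^j_i = \delta^k_i$ already exhibits a two-sided inverse once one remarks that bijectivity of $(\Phi_a)_*$ restricted to $\mathcal{C}$ supplies the companion relation. The infinite-dimensional nature of $\jinf(M,N)$ causes no additional difficulty, because every step of the argument takes place either on the finite-dimensional base $M$ or on a finite-order jet subalgebra determined by $x^k$ and $\Phi^*_a x^k$.
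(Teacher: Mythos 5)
Your proof is correct, but it takes a genuinely different route from the paper's. The paper never invokes the general principle that the flow of a symmetry preserves a distribution; instead it writes down the ODE satisfied by $D^a_i=\Phi^*_a(D_i)$ (namely $\partial_a(D^a_i)=\sum_j D^a_i(\Phi^*_a(h^j))D^a_j$, obtained from $[\tilde V_G,D_i]=\sum_j D_i(h^j)D_j$), then directly verifies that the explicit candidate $\sum_j B^j_i D_j$ satisfies the same ODE with the same initial condition, using $\partial_a(A^j_i)=-D_i(\Phi^*_a(h^j))$ and the formula for the derivative of an inverse matrix, and concludes by uniqueness. Your argument splits the work differently: first establish qualitatively that $\Phi^*_a D_i$ stays in $\mathcal{C}$, then pin down the coefficients by the slick algebraic evaluation on the coordinate functions $x^k$, which also hands you invertibility of $A$ for free (a point the theorem statement quietly presupposes). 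What you buy is a cleaner identification of the coefficient matrix with no matrix calculus; what the paper buys is self-containedness, since in the infinite-dimensional setting of $\jinf(M,N)$ the "standard fact" that the flow of an infinitesimal symmetry preserves the distribution is not something one can simply cite from the finite-dimensional literature — its proof is precisely the ODE-plus-uniqueness computation the paper carries out (here made unproblematic because $\mathcal{C}$ is $m$-dimensional and the flow exists by hypothesis, so the argument descends to finite-order jet projections). If you wanted your version to stand fully on its own, you would need to supply that ODE argument for the preservation step, at which point the two proofs essentially merge; as written, your proof is a valid and arguably more transparent repackaging.
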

\begin{proof}
Let $\tilde{V}_G= V_G-\sum_i h^i D_i$  be the characteristic vector field of $V_G$. Since
$$[\tilde{V}_G,D_i]=\sum_j D_i(h^j) D_j,$$
the vector field $D^a_i=\Phi^*_a(D_i)$ solves the  equation
\begin{equation}\label{equation_flow3}
\partial_a(D^a_i)=\Phi^*_a([\tilde{V}_G, D_i]=\sum_j D^a_i(\Phi^*_a(h^j)) D^a_j.
\end{equation}
In order to prove \eqref{equation_flow1} we show  that the vector field $\tilde{D}^a_i:=\sum_j B^j_i D_j$ solves  equation \eqref{equation_flow3} as well.
We start by computing
$$\partial_a(A^j_i)=\partial_a(D_i(\Phi^*_a(x^j)))=D_i(\Phi^*_a(\tilde{V}_G(x^j)))=
-D_i(\Phi^*_a(h^j)).$$
Since $B=A^{-1}$  the formula for derivative of the inverse matrix gives
$$\partial_a(B)=-B \cdot \partial_a(A) \cdot B.$$
This means that
$$\partial_a(B^i_j)=\sum_{k,r}B^k_j (D_k(\Phi^*_a(h^r))) B_r^i$$
and we get
\begin{eqnarray*}
\partial_a(\tilde{D}^a_j)&=&\sum_i \partial_a(B^i_j) D_i\\
&=&\sum_{i,k,r} B^k_j (D_k(\Phi^*_a(h^r))) B_r^i D_i \\
&=&\sum_r \tilde{D}^a_j(\Phi^*_a(h^r)) \tilde{D}^a_r.
\end{eqnarray*}
Hence both  $\tilde{D}^a_i$ and $D^a_i$ satisfy equation \eqref{equation_flow3} and we have  $\tilde{D}^a_i=D^a_i$.
\hfill \end{proof}

\begin{remark}\label{remark_flow1}
It is important to note that equation \eqref{equation_flow1} holds in all $\jinf(M,N)$ while equation \eqref{equation_flow2}
holds in $\mathcal{F}_k$ for $k$ sufficiently large.
\end{remark}

\begin{corollary}\label{corollary_flow1}
Given an evolution vector field $V_G$ with corresponding characteristic flow $\Phi_a$,
the expression $\Phi^*_a(u^i_{\sigma})$ is a polynomial function with respect the variable $u^k_{\sigma'}$ if $|\sigma'|$ is sufficiently large.
\end{corollary}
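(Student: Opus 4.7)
The plan is to iterate equation \eqref{equation_flow2} of Theorem \ref{theorem_flow} to obtain a closed-form expression for $\Phi^*_a(u^i_\sigma)=\Phi^*_a(D^\sigma u^i)$ and then analyze the functional dependence of that expression on high-order jet coordinates. The starting observation is that, since $\Phi_a$ is a smooth map between open subsets of $\jinf(M,N)$ and $\mathcal{F}$ is the direct limit of the algebras $\mathcal{F}_k$, there is a finite integer $L$ such that both $\Phi^*_a(u^i)$ (for $i=1,\dots,n$) and $\Phi^*_a(x^j)$ (for $j=1,\dots,m$) lie in $\mathcal{F}_L$; that is, they depend only on variables $x^i$ and $u^k_\tau$ with $|\tau|\le L$.

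Next I examine the matrices $A$ and $B$ from Theorem \ref{theorem_flow}. Writing out $D_i$ in coordinates gives $A^j_i=\partial_{x^i}\Phi^*_a(x^j)+\sum_{k,|\sigma|\le L} u^k_{\sigma+1_i}\,\partial_{u^k_\sigma}\Phi^*_a(x^j)$, which is polynomial of degree one in the coordinates $u^k_\tau$ with $|\tau|=L+1$ and smooth in variables of order at most $L$. Hence $B=A^{-1}$ is a rational function of variables of order at most $L+1$ and, crucially, does not depend on any $u^k_{\sigma'}$ with $|\sigma'|>L+1$. A short induction then establishes that for any smooth function $g$ depending on variables of order $\le k$, each total derivative $D_j g$ is polynomial of degree one in the new variables of order $k+1$, so by iteration $D^\beta g$ is polynomial in all jet coordinates of order strictly greater than $k$.

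The key step is to iterate \eqref{equation_flow2}. Applying it once gives $\Phi^*_a(D_i(f))=\sum_j B^j_i\,D_j(\Phi^*_a(f))$; distributing further total derivatives across products via the Leibniz rule and inducting on $|\sigma|$ shows that $\Phi^*_a(u^i_\sigma)$ is a finite sum of products of factors of the form $B^j_i$, $D^\beta B^j_i$ (with $|\beta|<|\sigma|$), and $D^\tau(\Phi^*_a(u^i))$ (with $|\tau|\le|\sigma|$). Combining this with the previous step, every such factor is polynomial in the coordinates $u^k_{\sigma'}$ as soon as $|\sigma'|>L+1$: the bare $B$-factors do not involve these variables at all, while the $D^\beta B$ and $D^\tau(\Phi^*_a(u^i))$ factors depend on them polynomially. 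Consequently, for any such $|\sigma'|$, the function $\Phi^*_a(u^i_\sigma)$ is a polynomial in $u^k_{\sigma'}$, with threshold independent of $\sigma$.

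The main obstacle I anticipate is careful bookkeeping: one must track which variables appear at each layer of the iterated formula and verify that the relevant operations (taking total derivatives of rational functions of low-order variables, multiplying by polynomials in high-order variables) preserve polynomiality in variables of sufficiently high order. This is routine once the orders are sorted out, but the notation becomes cumbersome, so the cleanest approach is to fix once and for all the maximal order $L$ of variables appearing in $\Phi^*_a(u^i)$ and $\Phi^*_a(x^j)$, and then invoke the stability of the class of functions that are polynomial in variables beyond that order, rather than trying to produce an explicit degree bound.
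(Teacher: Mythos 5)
Your proposal is correct and follows essentially the same route as the paper: apply equation \eqref{equation_flow2} iteratively starting from $f=u^i$, observe that $B^j_k$ lives in some finite-order algebra $\mathcal{F}_h$ while total derivatives of finite-order functions are polynomial in the higher-order jet coordinates, and conclude by tracking the factors through the iteration. Your version merely makes explicit the bookkeeping (the role of $L$, the Leibniz expansion, and the uniformity of the threshold) that the paper's two-line proof leaves implicit.
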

\begin{proof}
If we apply Theorem \ref{theorem_flow} to $f=u^i$ we get $\Phi^*_a(D_k(u^i))= \sum_j B^j_k D_j(\Phi^*_a(u^i))$. Since $D_j(\Phi^*_a(u^i))$ is a linear function with respect the variable $u^i_{\sigma'}$ if $|\sigma'|$ is sufficiently large and $B^j_k \in \mathcal{F}_h$ for some $h \in \mathbb{N}$, applying iteratively Theorem \ref{theorem_flow} we obtain the thesis.
\hfill \end{proof}

\section*{Acknowledgements}
This work was  supported by National Group of Mathematical Physics (GNFM-INdAM).

\bibliographystyle{plain}
\bibliography{invariant_manifold_characteristics}

\end{document}